\documentclass[a4paper]{article}
\usepackage{geometry}                % See geometry.pdf to learn the layout options. There are lots.
\geometry{a4paper}                    % ... or a4paper or a5paper or ... 
\usepackage{amssymb,amsmath,amsthm}
\usepackage{url}
\usepackage{graphicx}
\usepackage{enumitem}
\usepackage{relsize}
%\usepackage{showlabels}
%\showlabels{bibitem}

\usepackage{ifxetex}
\ifxetex
\usepackage{fontspec,xltxtra,xunicode}
\usepackage{unicode-math}
\defaultfontfeatures{Mapping=tex-text}
\newfontfamily\cyrillicfont{STIX Two Text}
\setromanfont[Mapping=tex-text]{STIX Two Text}
\setmathfont[Scale=MatchLowercase]{xits-math.otf}
\setmathfont[Scale=MatchLowercase]{STIX Two Math}
\setmathfont[Scale=MatchLowercase]{Asana-Math.otf}
\setsansfont[Scale=MatchLowercase,Mapping=tex-text]{PT Sans}
\setmonofont[Scale=0.9]{PT Mono}
\usepackage{polyglossia}
\setdefaultlanguage{english}
\setotherlanguage{russian}
\else
\usepackage{mathptmx}
\fi

\DeclareMathOperator{\KS}{\mathrm{C}\mskip 1.2mu}
\DeclareMathOperator{\KP}{\mathrm{K}\mskip 1.2mu}
\DeclareMathOperator{\mm}{\mathsf{m}\mskip 1.2mu}
\DeclareMathOperator{\ee}{\mathsf{e}\mskip 1.2mu}
\newcommand{\cnd}{\mskip 1mu|\mskip 1mu}
\let\le=\leqslant
\let\ge=\geqslant
\let\eps=\varepsilon
\newtheorem*{lemma*}{Lemma}
\newtheorem{lemma}{Lemma}
\newtheorem{proposition}{Proposition}
\newtheorem{theorem}{Theorem}
\theoremstyle{remark}
\newtheorem{remark}{Remark}
\newcommand{\Epsilon}{\mathcal{E}}
\newcommand{\Emax}{E_1}%{E_{\texttt{m}}}

\newcommand{\moreHorizSpaceInFormulas}{         
  \medmuskip=6mu plus 2mu minus 3mu
  \thickmuskip=8mu plus 2mu minus 3mu
}

\begin{document}
\title{Information Distance Revisited}
\author{Bruno Bauwens\footnote{
  National Research University Higher School of Economics, Faculty of Computerscience, 11 Pokrovsky Boulevard, Moscow, Russia
  \newline
  This work was initiated by Alexander Shen who informed the author about the error in~\cite{mahmud} during a discussion of the paper~\cite{vitanyi2017}.
  After explaining the proof of Theorem~\ref{th:main}, he simplified it and wrote the current manuscript. 
  Later, Theorem~\ref{th:informationDistanceEquality}, but he could not yet 
  verify the complete technical argument. 
  I am very grateful for his permission to submit the current document.
  The author is grateful to Mikhail Andreev for the proof of Proposition~\ref{prop:triangle_inequality} and many useful discussions.
  The author is also grateful to the participants of the Kolmogorov seminar in Moscow state university for useful discussions.
  }
}

\maketitle

\begin{abstract}
We consider the notion of information distance between two objects $x$ and $y$ 
introduced by Bennett, G\'acs, Li, Vitanyi, and Zurek~\cite{bglvz} 
as the minimal length of a program that computes $x$ from $y$ as well as computing $y$ from $x$, 
and study different versions of this notion. 
In the above paper, it was shown that the prefix version of information distance equals $\max(\KP(x\cnd y),\KP(y\cnd x)$ 
up to additive logarithmic terms. 
%Moreover, after a linear shift, the latter characterization provides a metric.
It was claimed by Mahmud~\cite{mahmud} that this equality holds up to additive $O(1)$-precision. 
We show that this claim is false, but does hold if the distance is at least logarithmic.
This implies that the original definition provides a metric on strings that are at superlogarithmically separated.
\end{abstract}

\section{Introduction}

Informally speaking, Kolmogorov complexity measures the amount of information in an object (say, a bit string) in bits. The complexity $\KS(x)$ of $x$ is defined as the minimal bit length of a program that generates $x$. This definition depends on the programming language used, but one can fix an optimal language that makes the complexity function minimal up to an $O(1)$ additive term. In a similar way one can define the \emph{conditional} Kolmogorov complexity $\KS(x\cnd y)$ of a string $x$ given some other string $y$ as a condition. Namely, we consider the minimal length of a program that transforms $y$ to $x$. Informally speaking, $\KS(x\cnd y)$ is the amount of information in $x$ that is missing in $y$, the number of bits that we should give in addition to~$y$ if we want to specify $x$.

The notion of \emph{information distance} was introduced in~\cite{bglvz} as ``the length of a shortest binary program that computes $x$ from  $y$ as well as computing $y$ from $x$.'' It is clear that such a program cannot be shorter than $\KS(x\cnd y)$  or $\KS(y\cnd x)$ since it performs both tasks; on the other hand, it cannot be much longer than the sum of these two quantities (we can combine the programs that map $x$ to $y$ and vice versa with a small overhead needed to separate the two parts and to distinguish $x$ from $y$). As the authors of~\cite{bglvz} note, ``being shortest, such a program should take advantage of any redundancy between the information required to go from $x$ to $y$ and the information required to go from $y$ to~$x$'', and the natural question arises: to what extent is this possible? The main result of~\cite{bglvz} gives the strongest upper bound possible and says that the information distance equals
$
a b %\max(\KS(x\cnd y),\KS(y\cnd x))
$
with logarithmic precision. In many applications, this characterization turned out to be useful [some ref].
In fact, in~\cite{bglvz} the prefix version of complexity, denoted by $\KP(x\cnd y)$, and the corresponding definition of information distance were used; see, e.g.~\cite{usv} for the detailed explanation of different complexity definitions. 
The difference between prefix and plain versions is logarithmic, 
so it does not matter whether we use plain or prefix versions if we are interested in results with logarithmic precision. 
However, the prefix version of the above characterization has an advantage: 
after adding a large enough constant, this distance satisfies the triangle inequality. 
The plain variant does not have this property, and this follows from Proposition~\ref{prop:triangle_inequality} below.
However, several inequalities that are true with logarithmic precision for plain complexity become true with $O(1)$-precision if prefix complexity is used. So one could hope that a stronger result with $O(1)$-precision holds for prefix complexity. 
One could hope that a similar result with $O(1)$-precision also holds for prefix complexity. If this is true, then also the original definition satisfies the triangle inequality (after a constant increase). Such a claim was indeed made in~\cite{mahmud}; in~\cite{lzlm} a similar claim is made with reference to~\cite{bglvz}.\footnote{The authors of~\cite{lzlm} define (section 2.2) the function $E(x,y)$ as the prefix-free non-bipartite version of the information distance (see the discussion below in section~\ref{subsec:four-versions}) and then write: ``the following theorem proved in [4] was a surprise: Theorem 1. $E(x,y)=\max\{\KS(x\cnd y),\KS(y\cnd x)\}$''. They do not mention that in the paper they cited as [4] (it is~\cite{bglvz} in our list) there is a logarithmic error term; in fact, they do not mention any error terms (though in other statements the constant term is written explicitly). Probably this is a typo, since more general Theorem 2 in~\cite{lzlm} does contain a logarithmic error term.} Unfortunately, the proof in~\cite{mahmud} contains an error and (as we will show) the result is not valid for prefix complexity with $O(1)$-precision. On the other hand, it is easy to see that the original argument from~\cite{bglvz} can be adapted for plain complexity to obtain the result with $O(1)$-precision, as noted in~\cite{vitanyi2017}.

In this paper we try to clarify the situation and discuss the possible definitions of information distance in plain and prefix versions, and their subtle points (one of these subtle points was the source of an error in~\cite{mahmud}). We also discuss some related notions. In Section~\ref{sec:plain} we consider the easier case of plain complexity; then in Section~\ref{sec:prefix} we discuss the different definitions of prefix complexity (with prefix-free and prefix-stable machines, as well as definitions using the a priori probability) and in Section~\ref{sec:prefix-distance} we discuss their counterparts for the information distance. In Section~\ref{sec:game} we use the game approach to show that indeed the relation between information distance (in the prefix version) and conditional prefix complexity is \emph{not} valid with $O(1)$-precision, contrary to what is said in~\cite{mahmud}. Finally, we show that if the information distance is at least logarithmic, then equality holds.

\section{Plain complexity and information distance}\label{sec:plain}

Let us recall the definition of plain conditional Kolmogorov complexity. Let $U(p,x)$ be a computable partial function of two string arguments; its values are also binary strings. We may think of $U$ as an interpreter of some programming language. The first argument $p$ is considered as a program and the second argument is an input for this program. Then we define the complexity function
\[
\KS_U (y\cnd x) = \min\{|p|\colon U(p,x)=y\};
\]
here $|p|$ stands for the length of a binary string $p$, so the right hand side is the minimal length of a program that produces output $y$ given input~$x$. The classical Solomonoff--Kolmogorov theorem says that there exists an optimal $U$ that makes $\KS_U$ minimal up to an $O(1)$-additive term. We fix some optimal $U$ and then denote $\KS_U$ by just $\KS$. (See, e.g., \cite{lv,usv} for the details.)

Now we want to define the information distance between $x$ and $y$. One can try the following approach: take some optimal $U$ from the definition of conditional complexity and then define
$$
 E_U(x,y)=\min\{|p|\colon U(p,x)=y \text{ and } U(p,y)=x\},
$$
i.e., consider the minimal length of a program that both maps $x$ to $y$ and $y$ to $x$. However, there is a caveat, as the following simple observation shows.

\begin{proposition}
There exists some computable partial function $U$ that makes $\KS_U$ minimal up to an $O(1)$ additive term, and still $E_U(x,y)$ is infinite for some strings $x$ and $y$ and therefore not minimal.
\end{proposition}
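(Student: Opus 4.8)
The plan is to start from some optimal machine $V$ and surgically modify it into a new optimal machine $U$ for which $E_U$ fails to be finite on a carefully chosen pair. The key point to exploit is that the definition of $E_U$ forces a \emph{single} program $p$ to satisfy two equations simultaneously, $U(p,x)=y$ and $U(p,y)=x$; if we can arrange that no program ever does both for a particular pair $(x_0,y_0)$, while not disturbing the complexity function by more than $O(1)$, we are done.

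Concretely, first I would fix two distinct strings, say $x_0 = 0$ and $y_0 = 1$ (any pair works), and define $U$ from $V$ as follows: on input $(p, x)$, simulate $V(p,x)$, but \emph{censor} the output in the one situation that would create a bidirectional program. The cleanest way: declare that $U(p, 0)$ is undefined whenever $V(p,0) = 1$, and leave $U$ equal to $V$ on all other inputs (or, symmetrically, censor $U(p,1)$ when it would equal $0$ — one direction suffices). Then for every program $p$ we have $U(p,0) \neq 1$, so the conjunction $U(p,0)=1 \wedge U(p,1)=0$ is never satisfied, hence $E_U(0,1) = \infty$ (and in particular it is not minimal, since with an optimal interpreter $E$ would be finite — indeed at most $\KS(0\cnd 1) + \KS(1\cnd 0) + O(1)$, which is finite).

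The second step is to check that $U$ is still optimal, i.e.\ $\KS_U(y\cnd x) \le \KS(y\cnd x) + O(1)$. For every input $x$ other than $0$, nothing changed, so there is nothing to prove. For $x = 0$, the only programs we removed are those with $V(p,0)=1$; but the string $1$ still has a short $U$-program given $0$ for a trivial reason — we can hard-wire one. Precisely, I would reserve a fixed short string, e.g.\ the empty program or some designated prefix, and make $U$ on that program output $1$ on input $0$ (and do whatever is convenient elsewhere). More carefully, to avoid any interference, build $U$ as a combination: $U$ first checks whether the program begins with a distinguished marker bit; if so, it runs a small hard-coded table that in particular maps $(\text{marker}, 0) \mapsto 1$; if not, it runs the censored simulation of $V$ described above (shifting $V$'s programs by one bit). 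This adds only $O(1)$ to all complexities and guarantees $\KS_U(1 \cnd 0) = O(1)$, so $U$ is optimal.

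The main obstacle, such as it is, is bookkeeping rather than conceptual: one must make sure the censoring is genuinely computable (it is — "simulate $V(p,0)$; if and when it halts with output $1$, diverge" is a partial computable behavior) and that the hard-coded patch does not collide with the shifted simulation of $V$, which is handled by the marker-bit prefixing. One should also double-check the claim that an optimal interpreter yields finite $E$: this is exactly the informal remark in the introduction that a bidirectional program exists of length $\le \KS(x\cnd y)+\KS(y\cnd x)+O(1)$, obtained by concatenating a program for each direction together with self-delimiting length information and having the combined program decide which half to run based on whether its input equals $x$ or $y$ (these being distinct and hard-coded). Thus $E_{U'}(0,1)$ is finite for the standard optimal $U'$, so our $U$ with $E_U(0,1)=\infty$ indeed fails to be minimal, completing the proof.
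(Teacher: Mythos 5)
Your approach is correct in substance but genuinely different from the paper's. The paper's construction is a uniform transformation applied to an optimal $V$: define $U'(\Lambda,x)=\Lambda$, $U'(0p,x)=0\cdot V(p,x)$, $U'(1p,x)=1\cdot V(p,x)$, so that the first bit of the program is copied to the output. This keeps $\KS_{U'}$ within $O(1)$ of $\KS$ (one extra bit suffices, since stripping the first output bit is computable), and since every output of $U'(q,\cdot)$ begins with the first bit of $q$, \emph{every} pair $(x,y)$ with distinct first bits gets $E_{U'}(x,y)=\infty$ — no patching or case analysis needed. Your route instead performs a targeted ``censor and patch'' on a single pair $(0,1)$: diverge whenever $V(p,0)=1$, then hard-wire a short program mapping $0\mapsto 1$. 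Both give what is asked, but the paper's is cleaner because optimality is immediate and the failure of $E_{U'}$ is structural rather than needing a repair step that must itself be checked for side effects.

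One point in your write-up that needs tightening: you say the hard-coded table maps $(\text{marker},0)\mapsto 1$ and does ``whatever is convenient elsewhere.'' Convenient must exclude $(\text{marker},1)\mapsto 0$ (and more generally any entry $(q,1)\mapsto 0$ paired with $(q,0)\mapsto 1$), or the patch reintroduces a bidirectional program for $(0,1)$ and makes $E_U(0,1)$ finite again, defeating the construction. The fix is trivial — make the table undefined on all other inputs, or at least never output $0$ on input $1$ — but since the entire argument hinges on the conjunction never holding, this should be stated explicitly rather than left as ``convenient.''
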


\begin{proof}
Consider an optimal function $U$ and then define $U'$ such that $U(\Lambda,x)=\Lambda$  where $\Lambda$ is the empty string, $U'(0p,x)=0U(p,x)$ and $U'(1p,x)=1U(p,x)$. In other terms, $U'$ copies the first bit of the program to the output and then applies $U$ to the rest of the program and the input. It is easy to see that $\KS_{U'}$ is minimal up to an $O(1)$ additive term, but $U'(q,\cdot)$ has the same first bit as $q$, so if $x$ and $y$ have different first bits, there is no $q$ such that $U(q,x)=y$ and $U(q,y)=x$ at the same time.
\end{proof}

On the other hand, the following proposition is true (and can be proven in the same way as the existence of the optimal $U$ for conditional complexity):

\begin{proposition}
There exists a computable partial function $U$ that makes $E_U$ minimal up to $O(1)$ additive term.
\end{proposition}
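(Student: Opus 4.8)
The plan is to mimic the standard proof of the Solomonoff--Kolmogorov optimality theorem, but to be careful about the symmetric nature of the object we are constructing. We want a single computable partial function $U$ such that for every other computable partial function $V$ there is a constant $c_V$ with $E_U(x,y) \le E_V(x,y) + c_V$ for all $x,y$. The key observation is that the family of "candidate" functions we need to dominate is countable and effectively enumerable: fix a computable enumeration $V_1, V_2, \dots$ of all computable partial functions of two string arguments (equivalently, all Turing machines with two string inputs, run in dovetailed fashion).

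First I would define $U$ as follows. On input $(p, x)$, interpret the program $p$ as a pair $\langle i, q\rangle$ where $i$ is a natural number encoded by a self-delimiting prefix of $p$ (say, using the standard doubling code of length $2\lceil\log i\rceil + O(1)$, or even just a block of $i$ zeros followed by a one if we are content with a less efficient but still $O(1)$-for-fixed-$V$ bound) and $q$ is the remaining suffix. Then set $U(p,x) = V_i(q,x)$. With this definition $\KS_U$ is already minimal up to an additive constant, but more importantly, the pair $(i,q)$ is recovered from $p$ in a way that does not depend on the second argument $x$: the decomposition $p = \langle i,q\rangle$ is a fixed function of $p$ alone. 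This is exactly the subtle point that a naive construction could get wrong — if the parsing of the program depended on the input, a program good for computing $y$ from $x$ need not parse the same way when fed $y$, and the symmetric requirement $U(p,x)=y \wedge U(p,y)=x$ could fail.

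The main step is then the domination argument. Suppose $V = V_i$ is some computable partial function and $p$ is a program witnessing $E_V(x,y) \le |p|$, i.e. $V_i(p,x) = y$ and $V_i(p,y) = x$. Let $p' = \langle i, p\rangle$ be the program for $U$ obtained by prepending the self-delimiting code of $i$. Because the prefix code for $i$ is parsed from $p'$ independently of the input, we get $U(p', x) = V_i(p,x) = y$ and simultaneously $U(p', y) = V_i(p,y) = x$. Hence $p'$ witnesses $E_U(x,y) \le |p'| = |p| + c_i$, where $c_i = 2\lceil \log i\rceil + O(1)$ depends only on $i$ (hence only on $V$), not on $x$ or $y$. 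Taking the minimum over all such $p$ gives $E_U(x,y) \le E_V(x,y) + c_V$.

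The only real obstacle, and the reason this proposition is stated separately rather than dismissed as obvious, is precisely the independence-of-parsing issue highlighted by the previous proposition: one must ensure that the "overhead" bits prepended to a program are stripped off by $U$ in a way that is blind to the input argument, so that the same overhead works for both directions of the computation. Once the encoding is set up so that $U$ first reads a self-delimiting program index and only then touches $x$, everything else is the routine verification that $U$ is a computable partial function (it is, since the $V_i$ are uniformly computable and dovetailing is effective) and that the constants $c_V$ do not depend on the strings. I would also remark in passing that, unlike for plain conditional complexity, we do \emph{not} get a bound of the form $E_U \le \KS(x\cnd y) + \KS(y\cnd x) + O(1)$ for free from this — that requires the separate combination argument sketched in the introduction — but that is not what the proposition asks for.
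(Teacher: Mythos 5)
Your proof is correct and follows essentially the route the paper indicates: the paper omits the argument, remarking only that it ``can be proven in the same way as the existence of the optimal $U$ for conditional complexity,'' and you carry out exactly that standard Solomonoff--Kolmogorov construction, correctly identifying the one point that deserves care here --- that the decomposition $p=\langle i,q\rangle$ must be parsed from $p$ alone, independently of the second argument, so that the same overhead prefix works for both directions $x\mapsto y$ and $y\mapsto x$. This is precisely the subtlety that Proposition 1 was flagging, and your construction handles it.
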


Now we may define information distance for plain complexity as the minimal function~$E_U$. It turns out that the original argument from~\cite{bglvz} can be easily adapted to show the following result (that is a special case of a more general result about several strings proven in~\cite{vitanyi2017}):

\begin{theorem}\label{thm:bglvz}
The minimal function $E_U$ equals $\max(\KS(x\cnd y),\KS(y\cnd x))+O(1)$.
\end{theorem}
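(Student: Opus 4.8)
The plan is to prove the two inequalities separately. The lower bound $\max(\KS(x\cnd y),\KS(y\cnd x))\le E_U(x,y)+O(1)$ is the easy half: any program $p$ with $U(p,x)=y$ and $U(p,y)=x$ is in particular a program that maps $x$ to $y$, so $\KS_U(y\cnd x)\le|p|$, and symmetrically; minimizing over such $p$ gives $\max(\KS_U(x\cnd y),\KS_U(y\cnd x))\le E_U(x,y)$, and then $\KS\le\KS_U+O(1)$ by optimality of the fixed machine defining $\KS$, since $U(\cdot,\cdot)$ is itself a legitimate conditional-complexity machine.

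For the upper bound it suffices, by optimality of the minimal $E_U$ (the second Proposition), to exhibit one computable partial function $W$ with $E_W(x,y)\le\max(\KS(x\cnd y),\KS(y\cnd x))+O(1)$; then the minimal $E_U\le E_W+O(1)$. I would build $W$ from a family of online edge-colourings, one per complexity level. For $k=0,1,2,\dots$ let $G_k$ be the graph on binary strings with an edge $\{x,y\}$ whenever $\KS(x\cnd y)\le k$ and $\KS(y\cnd x)\le k$. Two facts drive the construction: $G_k$ is computably enumerable uniformly in $k$ (enumerate, via the fixed optimal conditional machine, the pairs for which programs of length $\le k$ in both directions have been found), and every vertex of $G_k$ has degree less than $2^{k+1}$, since $\#\{y:\KS(y\cnd x)\le k\}$ is at most the number of programs of length $\le k$.

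Now colour the edges of $G_k$, in the order they are enumerated, by binary strings \emph{of length exactly} $k+2$, each time taking (say) the lexicographically least colour not yet used on an edge incident to either endpoint. Each endpoint forbids fewer than $2^{k+1}-1$ colours, so fewer than $2^{k+2}-2$ colours are forbidden in total while $2^{k+2}$ are available, hence a free colour always exists; this produces an enumerable proper edge-colouring $\chi_k$ of $G_k$. Because the colour alphabets for distinct $k$ are disjoint (strings of distinct lengths), no cross-level consistency is required: define $W(p,x)=y$ iff $\chi_{|p|-2}(\{x,y\})=p$. Then $W$ is computable (enumerate $\chi_{|p|-2}$ and wait for an edge at $x$ that receives colour $p$) and well defined (a proper colouring is injective on the edges at a vertex). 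Finally, if $k^*=\max(\KS(x\cnd y),\KS(y\cnd x))$ then $\{x,y\}\in G_{k^*}$, its colour $p:=\chi_{k^*}(\{x,y\})$ has length $k^*+2$, and $W(p,x)=y$ and $W(p,y)=x$ since the colouring is on unordered edges, so $E_W(x,y)\le k^*+2$.

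The only genuinely delicate point is making the colouring \emph{effective} and the codes \emph{unambiguous}. It is tempting to merge all levels into a single colouring of $\bigcup_k G_k$, but that union is the complete graph — every pair of strings has finite conditional complexity in both directions — so its vertices have infinite degree and no bounded colour alphabet suffices; and if one instead used nested alphabets $\{0,1\}^{\le k+1}$ one would be forced to keep $\chi_k$ consistent with $\chi_{k+1}$ even though an edge may be enumerated into a high-level $G_{k_1}$ long before it is discovered at its true level $G_{k^*}$. Keeping the levels separate with length-homogeneous, pairwise disjoint colour alphabets sidesteps both issues — the length of $p$ records the level — and reduces everything else to the textbook greedy bound for online edge-colouring of bounded-degree graphs.
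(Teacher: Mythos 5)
Your proof is correct and follows essentially the same approach as the paper: define for each complexity level a symmetric, uniformly enumerable graph on strings with degree bounded by a power of two, greedily edge-colour it online with colours whose length records the level, and read off the machine from the colouring. The only differences from the paper's proof are cosmetic (you use $\KS(\cdot\cnd\cdot)\le k$ with $(k+2)$-bit colours where the paper uses $<n$ with $(n+1)$-bit colours, a shift by one in the indexing), plus some helpful extra discussion of why a single merged colouring or nested alphabets would not work.
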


\begin{proof}
We provide the adapted proof here for the reader's convenience.  In one direction we have to prove that $\KS(x\cnd y)\le E_U(x,y)+O(1)$, and the same for $\KS(y\cnd x)$. This is obvious since the definition of $E_U$ contains more requirements for $p$ (it should map both $x$ to $y$ and $y$ to $x$, while in $\KS(x\cnd y)$ it is enough to map $y$ to $x$).

To prove the reverse inequality, consider for each $n$ the binary relation $R_n$ on strings (of all lengths) defined as
$$
R_n(x,y)\Leftrightarrow \KS(x\cnd y)< n \text{ and } \KS(y\cnd x) <n.
$$
By definition, this relation is symmetric. It is easy to see that $R_n$ is (computably) enumerable uniformly in $n$, since we may compute better and better upper bounds for $\KS$ reaching ultimately its true value. We think of $R_n$ as the set of edges of an undirected graph whose vertices are binary strings. Note that each vertex $x$ of this graph has degree less than $2^n$ since there are less than $2^n$ programs of length less than $n$ that map $x$ to its neighbors.

For each $n$, we enumerate edges of this graph (i.e., pairs in $R_n$). We want to assign colors to the edges of $R_n$ in such a way that edges that have a common endpoint have different colors. In other terms, we require that for every vertex $x$ all edges of $R_n$ adjacent to $x$ have different colors. For that, $2^{n+1}$ colors are enough. Indeed, each new edge needs a color that differentiates it from less than $2^n$ existing edges adjacent to one its endpoint and less than $2^n$ edges adjacent to other endpoint.

Let us agree to use $(n+1)$-bit strings as colors for edges in $R_n$, and perform  this coloring in parallel for all $n$. Now we define $U(p,x)$ for a $(n+1)$-bit string $p$ and arbitrary string $x$ as the string $y$ such that the edge $(x,y)$ has color $p$ in the coloring of edges from $R_n$. Note that $n$ can be reconstructed as $|p|-1$. The uniqueness property for colors guarantees that there is at most one $y$ such that $(x,y)$ has color $p$, so $U(p,x)$ is well defined. It is easy to see now that if $\KS(x\cnd y)<n$ and $\KS(y\cnd x)<n$, and $p$ is the color of the edge $(x,y)$, then $U(p,x)=y$ and $U(p,y)=x$ at the same time. This implies the reverse inequality (the $O(1)$ terms appears when we compare our $U$ with the optimal one).
\end{proof}

\begin{remark}
In the definition of information distance given above we look for a program $p$ that transforms $x$ to $y$ and also transforms $y$ to $x$. Note that we \emph{do not tell the program which of the two transformations is requested}. A weaker definition would provide also this information to $p$. This modification can be done in several ways. For example, we may require in the definition of $E$ that $U(p,0x)=y$ and $U(p,1y)=x$, using the first input bit as the direction flag. An equivalent approach is to use two computable functions $U$ and $U'$ in the definition and require that $U(p,x)=y$ and $U'(p,y)=x$. This corresponds to using different interpreters for both directions.

It is easy to show that the optimal functions $U$ and $U'$ exist for this (two-interpreter) version of the definition. A priori we may get a smaller value of information distance  in this way (the program's task is easier when the direction is known, informally speaking). But it is not the case for the following simple reason. Obviously, this new quantity is still an upper bound for both conditional complexities $\KS(x\cnd y)$ and $\KS(y\cnd x)$ with $O(1)$ precision. Therefore Theorem~\ref{thm:bglvz} guarantees that this new definition of information distance coincides with the old one up to $O(1)$ additive terms. (For the prefix versions of information distance such a simple argument does not work anymore, see below.)
\end{remark}

We have seen that different approaches lead to the same (up to $O(1)$ additive term) notion of plain information distance. There is also a simple and natural quantitative characterization of this notion provided by the following theorem.

\begin{theorem}
The function $E_U$ for optimal $U$ is the minimal up to $O(1)$ additive terms upper semicomputable non-negative symmetric function $E$ with two string arguments and natural values such that 
$$
  \#\{y\colon E(x,y)<n\} \le c2^n \eqno(*)
$$
for some $c$ and for all integers $n$ and strings $x$.  
\end{theorem}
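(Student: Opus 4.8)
The plan is to unpack the word ``minimal'' into two statements: (i) that $E_U$ for the optimal $U$ fixed above is itself an upper semicomputable, non-negative, symmetric function with natural values satisfying $(*)$; and (ii) that every function $E$ in this class satisfies $E_U(x,y)\le E(x,y)+O(1)$.

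For (i) everything is routine. The function $E_U$ has natural values and is symmetric straight from its definition, since the requirement ``$U(p,x)=y$ and $U(p,y)=x$'' is symmetric in $x$ and $y$ (and $E_U$ is finite everywhere by Theorem~\ref{thm:bglvz}). It is upper semicomputable: dovetail all computations $U(p,x)$ and $U(p,y)$ over all triples $(p,x,y)$, and each time a program $p$ with $U(p,x)=y$ and $U(p,y)=x$ shows up, record the fresh upper bound $|p|$ for $E_U(x,y)$; these bounds converge to $E_U(x,y)$. Finally $(*)$ holds even with $c=1$, because $E_U(x,y)<n$ forces $y=U(p,x)$ for some $p$ of length below $n$, and there are fewer than $2^n$ such programs.

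For (ii) I would rerun the edge-coloring argument from the proof of Theorem~\ref{thm:bglvz}, changing only one constant. Fix $E$ in the class together with the constant $c$ from $(*)$ and set $d=\lceil\log_2(2c)\rceil$. For each $n$ let $R_n$ be the (symmetric) graph on binary strings with an edge $\{x,y\}$ whenever $E(x,y)<n$; this graph is enumerable uniformly in $n$ because $E$ is upper semicomputable, and by $(*)$ every vertex has degree at most $c2^n$. Enumerating the edges of $R_n$ one at a time and giving each new edge the least color not already used on an enumerated edge sharing an endpoint with it, we never exhaust a supply of $2c2^n\le 2^{n+d}$ colors, so $(n+d)$-bit strings suffice as colors; carry this out in parallel for all $n$. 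Define a computable partial function $V$ by: on input $(p,x)$ with $|p|=m\ge d$, run the coloring procedure for $R_{m-d}$ and output $z$ as soon as some edge $\{x,z\}$ gets assigned color $p$ (and run forever if this never happens). Uniqueness of colors at each vertex makes $V$ well defined, and if $E(x,y)<n$ and $p$ is the color of $\{x,y\}$ in $R_n$ then $V(p,x)=y$ and $V(p,y)=x$, whence $E_V(x,y)\le n+d$. Choosing $n=E(x,y)+1$, the least $n$ with $E(x,y)<n$ (recall $E$ has natural values), gives $E_V(x,y)\le E(x,y)+d+1$. Since $U$ is optimal, $E_U(x,y)\le E_V(x,y)+O(1)\le E(x,y)+O(1)$, the additive constant depending on $E$ through $d$, hence through $c$.

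I expect no genuine obstacle here: the argument is that of Theorem~\ref{thm:bglvz} with $2^{n+1}$ replaced by $2^{n+d}$. The only points that need a little care are that the greedy coloring can still be performed when $R_n$ is merely enumerable rather than decidable (already handled in the proof of Theorem~\ref{thm:bglvz}) and that the constant $c$ in $(*)$ is part of the data of $E$, so the final additive $O(1)$ is allowed to depend on $E$.
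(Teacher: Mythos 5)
Your proposal is correct. Both directions check out: in (i) the direct verification that $E_U$ is symmetric, upper semicomputable, and satisfies $(*)$ with $c=1$ is fine (with Theorem~\ref{thm:bglvz} supplying finiteness), and in (ii) the adjusted coloring argument with $2^{n+d}$ colors, $d=\lceil\log_2(2c)\rceil$, does give $E_V(x,y)\le E(x,y)+d+1$ and then $E_U\le E(x,y)+O(1)$ by optimality of $U$, with the $O(1)$ depending on $E$ as it should.

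The route differs from the paper's in the harder direction. You re-run the edge-coloring construction from Theorem~\ref{thm:bglvz} directly on the graph $R_n=\{\{x,y\}: E(x,y)<n\}$, producing a fresh machine $V$ and then comparing with the optimal $U$. The paper instead observes that for $E$ in the class one can enumerate, for each $x$ and $n$, the fewer-than-$c2^n$ strings $y$ with $E(x,y)<n$, describe $y$ given $x$ by an $(n+\lceil\log c\rceil)$-bit ordinal number in this enumeration, and so obtain $\KS(y\cnd x)\le n+O(1)$; symmetry of $E$ gives the same for $\KS(x\cnd y)$, and Theorem~\ref{thm:bglvz} is then applied as a black box to conclude $E_U\le\max(\KS(x\cnd y),\KS(y\cnd x))+O(1)\le E(x,y)+O(1)$. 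The paper's version is shorter because it needs no coloring and no well-definedness check for a new machine, whereas yours is self-contained and makes the dependence of the additive constant on $c$ completely explicit; both are valid, and yours essentially recovers the coloring direction of Theorem~\ref{thm:bglvz} in slightly greater generality (degree bound $c2^n$ rather than $2^n$).
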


Recall that upper semicomputability of $E$ means that one can compute a sequence of total upper bounds for $E$ that converges to $E$. The equivalent requirement: the set of triples $(x,y,n)$ where $x,y$ are strings and $n$ are natural numbers, such that $E(x,y)<n$, is (computably) enumerable.

\begin{proof}
The function $\max(\KS(x\cnd y),\KS(y\cnd x))$ is upper semicomputable and symmetric. The inequality $(*)$ is true for it since it is true for the smaller function $\KS(y\cnd x)$ (for $c=1$; indeed, the number of programs of length less than $n$ is at most $2^n$).

On the other hand, if $E$ is some symmetric upper semicomputable function that satisfies $(*)$, then one can for any given $x$ and $n$ enumerate all $y$ such that $E(x,y)<n$. There are less than $c2^n$ strings $y$ with this property, so each $y$ can be described (given $x$) by a string of $n+\lceil \log c\rceil$ bits, its ordinal number in the enumeration. Note that the value of $n$ can be reconstructed from this string (by decreasing its length by $\lceil \log c\rceil$), so $\KS(y\cnd x)\le n+O(1)$ if $E(x,y)<n$. It remains to apply the symmetry of $E$ and Theorem~\ref{thm:bglvz}.
\end{proof}

\begin{remark}
The name ``information distance'' motivates the following question: does the plain information distance satisfy the triangle inequality? For the logarithmic precision the answer is positive, because 
$$
\KS(x\cnd z)\le \KS(x\cnd y)+\KS (y\cnd z)+O(\log (\KS(x\cnd y)+\KS(y\cnd z))).
$$
However, if we replace the last term by an $O(1)$-term, then the triangle inequality is no more true. Indeed, for every strings $x$ and $y$ the distance between an empty string $\Lambda$ and $x$ is $\KS(x)+O(1)$, and the distance between $x$ and some encoding of a pair $(x,y)$ is at most $\KS(y)+O(1)$, and the triangle inequality for distances with $O(1)$-precision would imply $\KS(x,y)\le\KS(x)+\KS(y)+O(1)$, and this is not true, see, e.g., \cite[section 2.1]{usv}.
\end{remark}

One may ask whether a weaker statement saying that there is a maximal (up to an $O(1)$ additive term) function in the class of all symmetric non-negative functions $E$ that satisfy both the condition $(*)$ and the triangle inequality, is true. The answer is negative, as the following proposition shows.

\begin{proposition}\label{prop:triangle_inequality}
There are two upper semicomputable symmetric functions $E_1$, $E_2$ that both satisfy the condition $(*)$ and the triangle inequality, such that no function that is bounded both by $E_1$ and $E_2$ can satisfy $(*)$ and the triangle inequality at the same time.
\end{proposition}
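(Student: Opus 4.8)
The plan is to engineer the two functions $E_1$ and $E_2$ so that each, on its own, is a perfectly legitimate information-distance-like function — upper semicomputable, symmetric, satisfying $(*)$ and the triangle inequality — but so that any symmetric $E$ bounded by both is forced to violate one of these. The guiding intuition is that the triangle inequality together with $(*)$ is a rigid constraint: if we make $E_1$ small along one ``path'' between a pair of far-apart strings and make $E_2$ small along a different, disjoint path, then a common lower envelope would have a short path of \emph{each} type, and splicing them would give too many strings at small distance from some vertex, contradicting $(*)$. Concretely, I would fix a string $x$ of length $n$ and a large family of ``targets'' $y$; I want both $E_1(x,y)$ and $E_2(x,y)$ to be forced down to roughly $n$ by a common bound, while $(*)$ for the common function only allows about $2^n$ such $y$ — but the construction will produce more.

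First I would set up the combinatorial skeleton. Pick parameters and, for each relevant scale $n$, designate an intermediate ``hub'' string $z_n$ (or a small pool of hubs). Define $E_1$ so that $E_1(x,z_n)$ is small (constant, or $O(\log n)$) for all $x$ in some large set $S_n$ of strings of length $n$, and $E_1(z_n,y)$ is small for all $y$ in another large set $T_n$, while $E_1$ is large (say, length-of-shorter-string plus the trivial bound) on all other pairs. Check that $(*)$ still holds: a given vertex can be close to at most the hubs it is attached to plus the constantly-many things attached through it, so the count stays $O(2^{|x|})$ as long as $|S_n|,|T_n| = O(2^n)$. Symmetrize and verify upper semicomputability (everything here is in fact computable). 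Then define $E_2$ the same way but with a \emph{different} pairing — e.g., a different hub $z_n'$, or $S_n$ and $T_n$ swapped, or a disjoint matching — so that the set of pairs $(u,v)$ made close by $E_2$ is essentially disjoint from that of $E_1$, yet the union of the two ``close'' relations, when you compose an $E_1$-edge with an $E_2$-edge, connects $x$ to a set of size $\gg 2^{|x|}$.

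Next, the heart of the argument: suppose $E \le E_1$ and $E \le E_2$ pointwise, $E$ symmetric, satisfying $(*)$ and the triangle inequality. For each $x\in S_n$, since $E \le E_1$ we get $E(x,z_n)$ small; since $E \le E_2$ we get $E(z_n, w)$ small for the strings $w$ that $E_2$ made close to $z_n$ on the \emph{other} side. By the triangle inequality, $E(x,w) \le E(x,z_n) + E(z_n,w)$ is small — but the construction arranges that the union over $x$ of these reachable $w$'s, relative to a single vertex, overflows the $c2^n$ budget of $(*)$. I must be careful that the hub $z_n$ itself is the vertex whose neighborhood blows up, or, if a single hub can't carry enough load (its own $(*)$ bound for $E_1$ limits how many $x$ attach to it), use a telescoping chain of constantly-many hubs so that the blow-up accumulates at the chain's endpoint; the triangle inequality lets the short distances add up along the chain with only an $O(1)$ penalty per step.

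The main obstacle I anticipate is the bookkeeping tension between the two uses of $(*)$: the same bound that I exploit to derive a contradiction for the common function $E$ also constrains how ``bushy'' I am allowed to make $E_1$ and $E_2$ in the first place (each hub can only be a near-neighbor of $O(2^{|x|})$ vertices). Resolving this requires choosing the hub structure — likely a short chain or small bipartite gadget at each scale, with $E_1$ responsible for the ``odd'' links and $E_2$ for the ``even'' links — so that neither $E_1$ nor $E_2$ alone overloads any vertex, but the alternating composition forced by having \emph{both} bounds does. Once the gadget is fixed, verifying $(*)$, symmetry, upper semicomputability, and the triangle inequality for $E_1$ and $E_2$ separately should be routine, and the contradiction for the common lower bound is then a direct counting argument.
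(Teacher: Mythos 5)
Your overall philosophy matches the paper's — build two separately well-behaved metrics, observe that any common lower bound must have short paths of both kinds, and splice them via the triangle inequality to produce too many strings near some vertex. However, what you write is a plan rather than a proof, and the mechanism you sketch has a gap that I do not think can be closed along the lines you propose.

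The crucial issue is that a chain of \emph{constantly many} hubs cannot deliver the required blow-up. Each hop must respect $(*)$ for $E_1$ or $E_2$ individually: a hop of length $k$ from some vertex can branch to at most $O(2^k)$ neighbours. After $j=O(1)$ hops of total length $m$ you therefore reach at most $O(2^m)$ vertices at distance $O(m)$ — which is perfectly consistent with $(*)$ for the common bound $E$, so there is no contradiction. To violate $(*)$ you need \emph{strictly more} than $O(2^m)$ vertices at distance $\le m+O(1)$. The paper obtains $\Theta(m\cdot 2^m)$, and the extra factor $m$ comes from a mechanism your sketch has no analogue of: the budget $m$ can be split as $k+l$ in $m+1$ ways between the two metrics, and for the concrete metrics the paper uses ($E_1(x,y)\le k$ iff $x,y$ of the same length agree outside the first $k$ positions; $E_2$ the same for the last $k$ positions, both ultrametrics with $(*)$-balls of size exactly $2^k$) each split reaches an essentially fresh batch of $\sim 2^m$ strings, since the two boundary positions that are forced to change pin down the split. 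A fixed hub gadget gives only a constant-factor multiplicity, not a factor of $m$. You also correctly flag, but do not resolve, a second tension: if $E_1(x,z_n)$ and $E_1(z_n,y)$ are both small for many $x$ and many $y$, then $E_1$'s own triangle inequality already forces $E_1(x,y)$ small on a large bipartite set of pairs, straining $(*)$ for $E_1$ itself; the paper's ultrametric construction has no hubs and sidesteps this entirely.
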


\begin{proof}
Let us agree that $E_1(x,y)$ and $E_2(x,y)$ are infinite when $x$ and $y$ have different lengths. If $x$ and $y$ are $n$-bit strings, then $E_1(x,y)\le k$ means that all the bits in $x$ and $y$ outside the first $k$ positions are the same, and $E_2(x,y)\le k$ is defined in a symmetric way (for the last $k$ positions). Both $E_1$ and $E_2$ satisfy the triangle inequality (and even the ultrametric inequality) and also satisfy the condition $(*)$, since the ball of radius $k$ consist of strings that coincide except for the first/last $k$ bits. If $E$ is bounded both by $E_1+O(1)$ and $E_2+O(1)$ and satisfies the triangle inequality, then by changing the first $k$ and the last $l$ positions in a string $x$ we get a string $y$ such that $E(x,y)\le k+l$, and it is easy to see that the number of strings that can be obtained in this way is not $O(2^{k+l})$, but $\Theta((k+l)2^{k+l})$.
\end{proof}

\section{Prefix complexity: different definitions}\label{sec:prefix}

The notion of prefix complexity was introduced independently by Levin~\cite{levin1971,levin1974,gacs1974} and later by Chaitin~\cite{chaitin1975}. There are several versions of this definition, and they all turn out to be equivalent, so people usually do not care much about technical details that are different. However, if we want to consider the counterparts of these definitions for information distance, the difference becomes important if we are interested in $O(1)$-precision. 

Essentially there are four different definitions of prefix complexity that appear in the literature.

\subsection{Prefix-free definition}

A computable partial function $U(p,x)$ with two string arguments and string values is called \emph{prefix-free} (with respect to the first argument) if $U(p,x)$ and $U(p',x)$ cannot be defined simultaneously for a string $p$ and its prefix $p'$ \emph{and for the same second argument $x$}. In other words, for every string $x$ the set of strings $p$ such that $U(p,x)$ is defined is prefix-free, i.e., does not contain a string and its prefix at the same time.

For a prefix-free function $U$ we may consider the complexity function $\KS_U(y\cnd x)$. In this way we get a smaller class of complexity functions (compared with the definition of plain complexity discussed above), and the Solomonoff--Kolmogorov theorem can be easily modified to show that there exists a minimal complexity function in this smaller class (up to $O(1)$ additive term, as usual). This function is called \emph{prefix conditional complexity} and usually is denoted by $\KP(y\cnd x)$. It is greater than $\KS(y\cnd x)$ since the class of available functions $U$ is more restricted; the relation between $\KS$ and $\KP$ is well studied (see, e.g.,~\cite[chapter 4]{usv} and references within).

The unconditional prefix complexity $\KP(x)$ is defined in the same way, with $U$ that does not have a second argument. We can also define $\KP(x)$ as $\KP(x\cnd y_0)$ for some fixed string $y_0$. This string may be chosen arbitrarily; for each choice we have $\KP(x)=\KP(x\cnd y_0)+O(1)$ but the constant in the $O(1)$ bound depends on the choice of $y_0$.

\subsection{Prefix-stable definition}

The prefix-stable version of the definition considers another restriction on the function~$U$. Namely, in this version the function $U$ should be \emph{prefix-stable} with respect to the first argument. This means that if $U(p,x)$ is defined, then $U(p',x)$ is defined and equal to $U(p,x)$ for all $p'$ that are extensions of $p$ (i.e., when $p$ is a prefix of $p'$). We consider the class of all computable partial prefix-stable functions $U$ and corresponding functions $\KS_U$, and observe that there exists an optimal prefix-stable function $U$ that makes $\KS_U$ minimal in this class (for prefix-stable functions).

It is rather easy to see that the prefix-stable definition leads to a version of complexity that does not exceed the prefix-free one (each prefix-free computable function can be easily extended to a prefix-stable one). The reverse inequality is not so obvious and there is no known direct proof; the standard argument compares both versions with the third one (the logarithm of a maximal semimeasure, see Section~\ref{subsec:semimeasure} below for this definition). 

Prefix-free and prefix-stable definitions correspond to the same intuitive idea: the program should be ``self-delimiting''. This means that the machine gets access to an infinite sequence of bits that starts with the program and has no marker indicating the end of a program. The prefix-free and prefix-stable definitions correspond to two possible ways of accessing this sequence. The prefix-free definition corresponds to a blocking read primitive (if the machine needs one more input bit, the computation waits until this bit is provided). The prefix-stable definition corresponds to a non-blocking read primitive (the machine has access to the input bits queue and may continue computations if the queue is currently empty). We do not go into details here; the interested reader could find this discussion in~\cite[section 4.4]{usv}.

\subsection{A priori probability definition}\label{subsec:apriori}

In this approach we consider the \emph{a priori probability} of $y$ given $x$, the probability of the event ``a random program maps $x$ to $y$''. More precisely, consider a prefix-stable function $U(p,x)$ and an infinite sequence $\pi$ of independent uniformly distributed random bits (a random variable). We say that $U(\pi,x)=y$ if $U(p,x)=y$ for some $p$ that is a prefix of $\pi$. Since $U$ is prefix-stable, the value $U(\pi,x)$ is well defined. For given $x$ and $y$, we denote by $m_U(y\cnd x)$ the probability of this event (the measure of the set of $\pi$ such that $U(\pi,x)=y$). For each prefix-stable $U$ we get some function $m_U$. It is easy to see that there exists an optimal $U$ that makes $m_U$ maximal (up to an $O(1)$-factor). Then we define prefix complexity $\KP(y\cnd x)$ as $-\log m_U(y\cnd x)$ for this optimal $U$. 

It is also easy to see that prefix-free functions $U$ (used instead of prefix-stable ones) lead to the same definition of prefix complexity. Informally speaking, if we have an infinite sequence of random bits as the first argument, we do not care whether we have blocking or non-blocking read access, the bits are always there. The non-trivial and most fundamental result about prefix complexity is that this definition (as logarithm of the probability) is equivalent to the two previous ones. As a byproduct of this result we see that the prefix-free and prefix-stable definitions are equivalent. This proof and the detailed discussion of the difference between the definitions can be found, e.g., in~\cite[chapter 4]{usv}.

\subsection{Semimeasure definition}\label{subsec:semimeasure}

The semimeasure approach defines a priori probability in a different way, as a convergent series that converges as slow as possible. More precisely, a \emph{lower semicomputable semimeasure} is a non-negative real-valued function $m(x)$ on binary strings such that $m(x)$ is a limit of a computable (uniformly in $x$) increasing sequence of rational numbers and $\sum_x m(x)\le 1$. There exists a maximal (up to $O(1)$-factor) lower semicomputable semimeasure $\mm(x)$, and its negative logarithm coincides with (unconditional) prefix complexity $\KP(x)$ up to an $O(1)$ additive term. 

We can define conditional prefix complexity in the same way, considering semimeasures with parameter $y$. Namely, we consider lower semicomputable non-negative real-valued functions $m(x,y)$ such that  $\sum_x m(x,y)\le 1$ for every $y$. Again there exists a maximal function among them, denoted by $\mm(x\cnd y)$, and its negative logarithm equals $\KP(x\cnd y)$ up to an $O(1)$ additive term.

To prove this equality, we note first that the a priori conditional probability $m_U(x\cnd y)$ is a lower semicomputable conditional semimeasure. The lower semicomputability is easy to see: we can simulate the machine $U$ and discover more and more programs that map $y$ to $x$. The inequality $\sum_x m_U(x\cnd y)$ also has a simple probabilistic meaning: the events ``$\pi$ maps $y$ to $x$'' for a given $y$ and different $x$ are disjoint, so the sum of their probabilities does not exceed $1$. The other direction (starting from a semimeasure, construct a machine) is a bit more difficult, but in fact it is possible (even exactly, without additional $O(1)$-factors). See~\cite[chapter 4]{usv} for details.

The semimeasure definition can be reformulated in terms of complexities (by taking exponents): $\KP(x\cnd y)$ is a minimal (up to $O(1)$ additive term) upper semicomputable non-negative integer function $k(x,y)$ such that
$$
\sum_x2^{-k(x,y)} \le 1
$$ 
for all $y$.
A similar characterization of plain complexity would use a weaker requirement
$$
\#\{x\colon k(x,y) < n\} < c2^n
$$
for some $c$ and all $y$. (We discussed a similar result for information distance where the additional symmetry requirement was used, but the proof is the same.)

\subsection{Warning}\label{subsec:warning}

There exists a definition of plain conditional complexity that does \emph{not} have a prefix-version counterpart. Namely, the plain conditional complexity $\KS(x\cnd y)$ can be equivalently defined as the \emph{minimal unconditional plain complexity of a program that maps $y$ to $x$}.  In this way we do not need the programming language used to map $y$ to $x$ to be optimal; it is enough to assume that we can computably translate programs in other languages into our language; this property, sometimes called \emph{$s$-$m$-$n$-theorem} or \emph{G\"odel property of a computable numbering}, is true for almost all reasonable programming languages. Of course, we still assume that the language used in the definition of unconditional Kolmogorov complexity is optimal. 

One may hope that $\KP(x\cnd y)$ can be similarly defined as the minimal (unconditional) prefix complexity of a program that maps $y$ to $x$.  The following proposition shows that it is not the case.

\begin{proposition}\label{prop:not-program-complexity}
The prefix complexity $\KP(x\cnd y)$ does not exceed the minimal prefix complexity of a program that maps $y$ to $x$; however, the difference between these two quantities is not bounded.
\end{proposition}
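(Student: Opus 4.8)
The plan is to prove the two parts separately. The inequality $\KP(x\cnd y)\le \KP(p)+O(1)$, where $p$ is any program mapping $y$ to $x$ (with respect to, say, the optimal plain machine $V$ used in the definition of $\KS(\cdot\cnd\cdot)$), is the easy direction. Given a prefix-free machine witnessing $\KP$, I would build a prefix-free machine $M$ that, on input (a self-delimiting code of) a program $p$ and condition $y$, first decodes $p$ using the prefix-free reading convention and then runs $V(p,y)$. Since the a priori probability / semimeasure characterization of $\KP$ is available from Section~\ref{subsec:semimeasure}, the cleanest route is: $\mm(x\cnd y)\ge \sum\{\mm(p): V(p,y)=x\}\cdot\Omega \ge \mm(p)\cdot\Omega$ for any single such $p$, where the sum is over programs and $\Omega$ is an $O(1)$ factor; taking negative logarithms gives $\KP(x\cnd y)\le \KP(p)+O(1)$, and minimizing over $p$ gives the claim. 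One must check that $x\mapsto \sum\{\mm(p): V(p,y)=x\}$ is a lower semicomputable conditional semimeasure in $x$ (the mass bound follows because distinct outputs $x$ come from disjoint sets of programs), which is routine.

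For the separation — the difference is unbounded — the idea is to exploit that $\KP(x\cnd y)$ "knows" $y$ for free, whereas the prefix complexity of a program mapping $y$ to $x$ must, roughly, pay for locating $x$ among the possible outputs without being able to use $y$ as a structured resource in the same way. A concrete construction: take $y$ to range over strings of the form $y_n$ encoding the number $n$ together with enough information that $\KP(n\cnd y_n)=O(1)$, and take $x$ such that $\KP(x\cnd y_n)$ is small but every program $p$ with $V(p,y_n)=x$ has $\KP(p)$ large. The simplest incarnation: let $x=y$, so $\KP(x\cnd y)=O(1)$ (the identity program), and ask whether there is a short program, in the $\KP$ sense, that maps $y$ to $y$; the point is that a program mapping $y$ to $y$ for a \emph{fixed} complex $y$ must essentially contain $y$, so its plain complexity is about $\KP(y)$ — but we need its \emph{prefix} complexity, and $\KP(p)\ge \KS(p)$, so if the shortest such $p$ has $\KS(p)\approx \KP(y)$ then $\KP(p)\gtrsim\KP(y)$, which is unbounded as $y$ varies. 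The subtlety is that $p$ need not literally encode $y$: it could be a clever program that maps many inputs to themselves. So one must argue that, for a suitably chosen family of $y$'s (e.g. random strings, or strings of high complexity), any program $p$ with $V(p,y)=y$ satisfies $\KP(p)\ge \KP(y)-O(1)$ — intuitively because from $p$ and the machine $V$ one can in principle search for inputs on which $p$ acts as the identity and thereby describe $y$, giving $\KP(y)\le \KP(p)+O(1)$.

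The main obstacle I anticipate is making this last argument rigorous: a program $p$ may act as the identity on a large (even infinite) set of strings, so $p$ alone does not single out $y$, and one cannot directly conclude $\KP(y)\le\KP(p)+O(1)$. To fix this I would instead choose the pair $(x,y)$ more carefully so that the \emph{output} $x$ is what carries the complexity while being cheap given $y$: for instance, fix a computable bijection and let $x$ be a string with $\KP(x\cnd y)=O(1)$ but $\KP(x)$ large, and show that any $p$ with $V(p,y)=x$ lets us reconstruct $x$ from $p$ and $y$, hence $\KP(x\cnd y)\le\KP(p)+O(1)$ — which is the wrong direction. The genuinely correct approach, which the paper's phrasing ("difference is not bounded") points to, is a counting/incompressibility argument: fix $n$ and consider all pairs; there are only $O(2^k)$ programs $p$ with $\KP(p)<k$, but for a fixed $y$ the number of strings $x$ reachable by such programs with $V(p,y)=x$ is at most $O(2^k)$, whereas one can choose, for each length, a $y$ and an $x$ with $\KP(x\cnd y)\le k$ yet such that no length-$k$-prefix-complexity program computes $x$ from $y$, by diagonalizing against the (enumerable) list of short programs. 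I would carry out this diagonalization explicitly, and I expect the bookkeeping there — ensuring $\KP(x\cnd y)$ stays small while defeating every candidate program — to be the technical heart of the proof.
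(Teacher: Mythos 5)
Your sketch of the easy direction is fine (the paper constructs the prefix-stable machine $U_2(p,x)=[U_1(p)](x)$ directly rather than going through conditional semimeasures, but either route works and the semimeasure version is routine, as you say). The separation, however, is where the proposal genuinely breaks down, and the issue is not just bookkeeping.

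Your final plan is a cardinality argument: there are $O(2^k)$ programs $p$ with $\KP(p)<k$, hence $O(2^k)$ reachable outputs $x$ for fixed $y$, and you propose to diagonalize. But for a fixed $y$ there are also at most $2^k$ strings $x$ with $\KP(x\cnd y)<k$, so the two cardinality bounds match and pure counting cannot separate the quantities. Indeed, the exactly analogous statement for plain complexity is \emph{true} with $O(1)$ precision (see the discussion at the start of Section~\ref{subsec:warning}), and a counting argument that never uses anything beyond ``at most $2^k$ short programs'' would apply verbatim to $\KS$, so it cannot possibly prove the prefix separation. The missing idea is that you must use the Kraft-type \emph{mass} constraint $\sum_q 2^{-\KP(q)}\le 1$, not just the cardinality bound. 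The paper does this by choosing $y=n$ (the length, as a string) and $x$ ranging over \emph{all} $n$-bit strings: then $\KP(x\cnd n)\le n+O(1)$ uniformly, so if the equality held one would get, for each $n$, at least $2^n$ \emph{distinct} programs $q_x$ each with $\KP(q_x)\le n+c$; these contribute mass at least $2^{-c}$ for every $n$, and since they are pairwise distinct across different $n$ as well, the series $\sum_q 2^{-\KP(q)}$ has tails bounded below by $2^{-c-1}$ and so diverges --- a contradiction. Your earlier attempts ($x=y$ random; ``reconstruct $x$ from $p$ and $y$'') run into exactly the problems you identify, and the diagonalization you fall back on is both unnecessary (the paper's argument is non-constructive) and, as argued above, insufficient in principle without the Kraft inequality.
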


\begin{proof} To prove the first part, assume that $U_1(p)$ is a prefix-stable function of one argument that makes the complexity function 
$$
\KS_{U_1}(q)=\min\{ |p|\colon U(p)=q\}
$$
 minimal. Then $\KS_U(q)=\KP(q)+O(1)$. (We still need an $O(1)$ term since the choice of an optimal prefix-stable function is arbitrary). Then consider the function
$$
U_2(p,x)=[U_1(p)](x)
$$
where $[q](x)$ denotes the output of a program $q$ on input $x$. Then $U_2$ is a prefix-stable function from the definition of conditional prefix complexity, and 
$$
\KS_{U_2}(y\cnd x) \le \KS_{U_1}(q)
$$
for any program $q$ that maps $x$ to $y$ (i.e., $[q](x)=y$). This gives the inequality mentioned in the proposition. Now we have to show that this inequality is not an equality with $O(1)$-precision.

Note that $\KP(x\cnd n)\le n+O(1)$ for every binary string $x$ of length $n$. Indeed, a prefix-stable (or prefix-free) machine that gets $n$ as input can copy 
% the? I doubt
$n$ first bits of its program to the output. (The prefix-free machine should check that there are exactly $n$ input bits.) In this way we get $n$-bit programs for all strings of length $n$. 

Now assume that the two quantities coincide up to an $O(1$) additive term. Then for every string $x$ there exists a program $q_x$ that maps $|x|$ to $x$ and $\KP(q_x)\le |x|+c$ for all $x$ and some~$c$. Note that $q_x$ may be equal to $q_y$ for $x\ne y$, but this may happen only if $x$ and $y$ have different lengths. Consider now the set $Q$ of all $q_x$ for all strings $x$, and the series 
$$
\sum_{q\in Q} 2^{-\KP(q)}.\eqno(**)
$$
This sum does not exceed $1$ (it is a part of a similar sum for all $q$ that is at most $1$, see above). On the other hand, we have at least $2^n$ different programs $q_x$ for all $n$-bit strings $x$, and they correspond to different terms in $(**)$; each of these terms is at least $2^{-n-c}$. We get a converging series that contains, for every $n$, at least $2^n$ terms of size at least $2^{-n-c}$. It is easy to see that such a series does not exist. Indeed, each tail of this series should be at least $2^{-c-1}$ (consider these $2^n$ terms for large $n$ when at least half of these terms are in the tail), and this is incompatible with convergence. 
\end{proof}

Why do we get a bigger quantity when considering the prefix complexity of a program that maps $y$ to $x$? The reason is that the prefix-freeness (or prefix-stability) requirement for the function $U(p,x)$ is formulated separately for each $x$: the decision where to stop reading the program $p$ \emph{may depend on its input}~$x$. This is not possible for a prefix-free description of a program that maps $x$ to~$y$. It is easy to overlook this problem when we informally describe prefix complexity $\KP(x\cnd y)$ as ``the minimal length of a program, written in a self-delimiting language, that maps $y$ to $x$'', because the words ``self-delimiting language''  implicitly assume that we can determine where the program ends while reading the program text (and before we know its input), and this is a wrong assumption. 

\subsection{Historical digression}\label{subsec:history}

Let us comment a bit on the history of prefix complexity. It appeared first in 1971 in Levin's PhD thesis~\cite{levin1971}; Kolmogorov was his thesis advisor. Levin used essentially the semimeasure definition (formulated a bit differently). This thesis remained unpublished for a very long time (and it was in Russian). In 1974 G\'acs' paper~\cite{gacs1974} appeared where the formula for the prefix complexity of a pair was proven. This paper mentioned prefix complexity as ``introduced by Levin in [4], [5]'' (\cite{levin1973} and \cite{levin1974} in our numbering).  The first of these two papers does not say anything about prefix complexity explicitly, but defines the monotone complexity of sequences of natural numbers, and prefix complexity can be considered as a special case when the sequence has length $1$ (this is equivalent to the prefix-stable definition of prefix complexity).  The second paper (we discuss it later in this section) has a comment ``(to appear)'' in G\'acs' paper.

G\'acs does not reproduce the definition of prefix complexity saying only that it is ``defined as the complexity of specifying $x$ on a machine on which it is impossible to indicate the endpoint [the English translation says ``halting'' instead of ``endpoint'' but this is an obvious translation error] of a master program: an infinite sequence of binary symbols enters the machine and the machine must itself decide how many binary symbols are required for its computation''. This description is not completely clear, but it looks more like a prefix-free definition (if we understand it in such a way that the program is written on a one-directional tape and the machine decides where to stop reading).  G\'acs also notes that prefix complexity (he denotes it by $KP(x)$) ``is equal to the [negative] base two logarithm of a universal semicomputable probability measure that can be defined on the countable set of all words''. 

Levin's 1974 paper~\cite{levin1974} says that ``the quantity $KP(x)$ has been investigated in details in [6,7]''. Here [7] in Levin's numbering is G\'acs paper cited above (\cite{gacs1974} is our numbering) and has the comment ``in press'', and [6] in Levin's numbering is cited as %``\emph{Левин Л.А.}, О различных видах алгоритмической сложности конечных объектов (в печати)'' 
[Levin L.A., On different version of algorithmic complexity of finite objects, to appear]. Levin does not have a paper with exactly this title, but the closest approximation is his 1976 paper~\cite{levin1976}, where prefix complexity is defined as the logarithm of a maximal semimeasure.  Except for these references, \cite{levin1974} describes the prefix complexity in terms of prefix-stable functions: ``It differs from the Kolmogorov complexity measure $\langle\ldots\rangle$ in that the decoding algorithm $A$ has the following ``prefix'' attribute: if $A(p_1)$ and $A(p_2)$ are defined and distinct, then $p_1$ cannot be a beginning fragment of $p_2$''.  

The prefix-free and a priori probability definitions were given independently by Chaitin in~\cite{chaitin1975} (in different notation) together with the proof of their equivalence, so~\cite{chaitin1975} was the first publication containing this (important) proof.
 
Now it seems that the most popular definition of prefix complexity is the prefix-free one (it is given as the main definition in~\cite{lv}, for example).

\section{Prefix complexity and information distance}\label{sec:prefix-distance}

\subsection{Four versions of prefix information distance}\label{subsec:four-versions}

Both the prefix-free and prefix-stable versions of prefix complexity have their counterparts for the information distance.

Let $U(p,x)$ be a partial computable prefix-free [prefix-stable] function of two string arguments having string values. Consider the function 
$$
E_U(x,y) = \min\{ |p|\colon U(p,x) = y \text{ and } U(p,y)=x\}
$$
As before, one can easily prove that there exists a minimal (up to $O(1)$) function among all functions $E_U$ of the class considered.  It will be called \emph{prefix-free} [resp.~\emph{prefix-stable}] \emph{information distance}  function. We clarify the difference between these variants.

Note that only the cases when $U(p,x)=y$ and also $U(p,y)=x$ matter for $E_U$. So we may assume without loss of generality that $U(p,x)=y \Leftrightarrow U(p,y)=x$ waiting until both equalities are true before finalizing the values of $U$. Then for every $p$ we have some matching $M_p$ on the set of all strings: an edge $x$--$y$ is in $M_p$ if $U(p,x)=y$ and $U(p,y)=x$. This is indeed a matching: for every $x$ only $U(p,x)$ may be connected with $x$.

The set $M_p$ is enumerable uniformly in $p$. In the prefix-free version the matchings $M_p$ and $M_q$ are disjoint (have no common vertices) for two compatible strings $p$ and $q$ (one is an extension of the other). For the prefix-stable version $M_p$ increases when $p$ increases (and remains a matching).  It is easy to see that a family $M_p$ that has these properties always corresponds to some function $U$ (here we have two statements: for prefix-free and prefix-stable version).

There is another way in which this definition could be modified. As we have discussed for the plain complexity, we may consider two different functions $U$ and $U'$ and consider the distance function
$$
E_{U,U'}(x,y) = \min\{ |p|\colon U(p,x) = y \text{ and } U'(p,y)=x\}.
$$
Intuitively this means that we know the transformation direction in addition to the input string. This corresponds to matchings in a bipartite graph where both parts consist of all binary strings; the edge $x$--$y$ is in the matching $M_p$ if $U(p,x)=y$ and $U'(p,y)=x$.  Again instead of the pair $(U,U')$ we may consider the family of matchings that are disjoint (for compatible $p$, in the prefix-free version) or monotone (for the prefix-stable version). In this way we get two other versions of information distance that could be called \emph{bipartite prefix-free} and \emph{bipartite prefix-stable} information distances.

In~\cite{bglvz} the information distance is defined as the prefix-free information distance (with the same function $U$ for both directions, not two different ones). The definition (section III) considers the minimal function among all $E_U$. This minimal function is denoted by $E_0(x,y)$ (while $\max(\KP(x\cnd y),\KP(y\cnd x))$ is denoted by $E_1(x,y)$, see section I of the same paper). The inequality $E_1\le E_0$ is obvious, and the reverse inequality (with logarithmic precision) is proven in~\cite{bglvz} as Theorem 3.3.

Which of the four versions of prefix information distance is the most natural?  Are they really different? It is easy to see that the prefix-stable version (bipartite or not) does not exceed the corresponding prefix-free version, since every prefix-free function has a prefix-stable extension. Also each bipartite version (prefix-free or prefix-stable) does not exceed the corresponding non-bipartite version for obvious reasons (one may take $U=U'$). It is hard to say which version is most natural, and the question whether some of them coincide or all four are different, remains open. But as we will see (Theorem~\ref{th:main}), the smallest of all four, the prefix-stable bipartite version, is still bigger than $E_1$ (the maximum of conditional complexities), and the difference is unbounded, so for all four versions (including the prefix-free non-bipartite version used both in~\cite{bglvz,lzlm,mahmud}) the equality with $O(1)$-precision is not true, contrary to what is said in~\cite{mahmud}.

However, before going to this negative result, we prove some positive results about the definition of information distance that is a counterpart of the a priori probability definition of prefix complexity.

\subsection{A priori probability of going back and forth}\label{subsec:apriori-distance}

Fix some prefix-free function $U(p,x)$. The conditional a priori probability $m_U (y\cnd x)$ is defined as
$$
\Pr_{\pi} [U(\pi,x)=y]
$$
where $U(\pi,x)=y$ means that $U(p,x)=y$ for some $p$ that is a prefix of $\pi$. As we discussed, there exists a maximal function among all $m_U$, and its negative logarithm equals the conditional prefix complexity $\KP(y\cnd x)$.

Now let us consider the counterpart of this construction for the information distance. The natural way to do this is to consider the function
$$
e_{U}(x,y)=\Pr_{\pi} [U(\pi,x)=y \text{ and } U(\pi,y)=x].
$$
Note that in this definition the prefixes of $\pi$ used for both computations are not necessarily the same. It is easy to show, as usual, that there exists an \emph{optimal} machine $U$ that makes $e_U$ maximal. Fixing some optimal $U$, we get some function $\ee(x,y)$ (different optimal $U$ lead to functions that differ only by $O(1)$-factor). The negative logarithm of this function coincides with $E_1$ (from~\cite{bglvz}) with $O(1)$-precision, as the following result says.

\begin{theorem}\label{thm:apriori}
$$
-\log \ee(x,y)=\max(\KP(x\cnd y),\KP(y\cnd x))+O(1).
$$
\end{theorem}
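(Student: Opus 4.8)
The plan is to read off Theorem~\ref{thm:apriori} from the semimeasure characterisation of $\max(\KP(x\cnd y),\KP(y\cnd x))$ in one direction, and from an explicit prefix-free machine in the other.

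\emph{The easy inequality} $-\log\ee(x,y)\ge\max(\KP(x\cnd y),\KP(y\cnd x))-O(1)$. First I would observe that $\ee$ lies in a class where $2^{-\max(\KP(x\cnd y),\KP(y\cnd x))}$ is essentially the largest element. Indeed, for any prefix-free $U$ the function $e_U(x,y)=\Pr_\pi[U(\pi,x)=y\text{ and }U(\pi,y)=x]$ is symmetric, lower semicomputable (one can enumerate the programs witnessing the event), and satisfies $\sum_y e_U(x,y)\le\sum_y\Pr_\pi[U(\pi,x)=y]\le 1$ for every $x$, since for fixed $x$ the events ``$U(\pi,x)=y$'' are disjoint over $y$. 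So $\ee$ has the same three properties. But any symmetric lower semicomputable $e$ with all these row sums bounded by $1$ is, for each fixed $y$, a conditional semimeasure in its first argument, hence $e(x,y)\le O(1)\cdot\mm(x\cnd y)=O(1)\cdot 2^{-\KP(x\cnd y)}$, and by symmetry also $\le O(1)\cdot 2^{-\KP(y\cnd x)}$, so $e(x,y)\le O(1)\cdot 2^{-\max(\KP(x\cnd y),\KP(y\cnd x))}$. Taking $e=\ee$ and negative logarithms gives the inequality.

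\emph{The hard inequality.} Here I would exhibit one prefix-free $U$ with $e_U(x,y)\ge\Omega(1)\cdot 2^{-\max(\KP(x\cnd y),\KP(y\cnd x))}$; optimality of $U$ in the definition of $\ee$ then finishes the proof. Set $g(x,y)=\min(\mm(x\cnd y),\mm(y\cnd x))$. This is symmetric and lower semicomputable, with $-\log g(x,y)=\max(\KP(x\cnd y),\KP(y\cnd x))+O(1)$, and — the point that makes the symmetrisation work — $\sum_y g(x,y)\le\sum_y\mm(y\cnd x)\le 1$ for every $x$ (bound the minimum by the \emph{other} semimeasure, the one summable over its first argument). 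So it is enough to show: for every symmetric lower semicomputable $g$ with $\sum_y g(x,y)\le 1$ for all $x$, there is a prefix-free $U$ with $e_U(x,y)\ge\Omega(g(x,y))$. I would build $U$ \emph{symmetrically}, i.e.\ so that $U(p,x)=y\Leftrightarrow U(p,y)=x$; then, writing $C_{xy}=\{p:U(p,x)=y\}$, the set $C_{xy}$ is prefix-free, the family $\{C_{xy}\}_y$ is jointly prefix-free for each fixed $x$ (these are the slices of the prefix-free domain of $U(\cdot,x)$), and $e_U(x,y)=\Pr_\pi[\text{some prefix of }\pi\text{ lies in }C_{xy}]=\sum_{p\in C_{xy}}2^{-|p|}$. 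So the task reduces to an online combinatorial one: reading the lower-semicomputable approximations to $g$, each time $g(x,y)$ crosses a threshold $2^{-k}$ we add to $C_{xy}$ a fresh dyadic cylinder of length $k+O(1)$ that is prefix-incomparable with every cylinder already placed at $x$ or at $y$ (choosing the constant so that the cylinders at any one vertex have total measure at most $1/4$, hence cover measure $<1/2$ at the two endpoints of a new edge). One then checks that a free cylinder always exists and that $\sum_{p\in C_{xy}}2^{-|p|}=\Theta(g(x,y))$; comparing this $U$ with the optimal machine gives $-\log\ee(x,y)\le\max(\KP(x\cnd y),\KP(y\cnd x))+O(1)$.

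\emph{The main obstacle} is precisely the allocation step, which is the reason this works only for the a priori-probability version. A naive ``leftmost free cylinder of the right length'' rule is not obviously sufficient: a vertex $x$ may be incident to unboundedly many edges of tiny weight, and if their (very deep) cylinders get scattered across Cantor space they may meet every medium-sized cylinder and leave no room for a later, heavier edge at $x$. The fix has to keep the occupied region at each vertex ``left-packed'' — for instance by processing the threshold-crossings in order of non-increasing weight, so that cylinder lengths at a vertex never decrease and the occupied set stays an initial segment of Cantor space; or by maintaining the invariant that the occupied set at $x$ sits inside an interval of length $O(\mu(F_x))$ of its own measure. Reconciling such an ordering or invariant with the fact that $g$ is merely lower semicomputable (the crossings are discovered in an uncontrolled order and never stop) is the technical heart of the argument, and — tellingly — it is exactly the place where the analogous construction for the prefix-free and prefix-stable (and bipartite) versions of information distance must fail, consistently with Theorem~\ref{th:main}.
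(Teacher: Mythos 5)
Your easy direction is fine and essentially parallels the paper's (the paper argues more directly that the event ``$U(\pi,x)=y$ and $U(\pi,y)=x$'' is contained in the event ``$U(\pi,x)=y$'', but your route through the semimeasure characterisation is also correct). The hard direction, however, has a genuine gap that you yourself half-diagnose and then leave unresolved.

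The problem is that you over-constrain the construction. You insist on adding, at each threshold crossing of $g(x,y)$, a \emph{single} dyadic cylinder of prescribed length $k+O(1)$, and then observe (correctly) that it is not clear such a cylinder always exists once many small cylinders have been scattered at $x$ and at $y$. You propose a ``left-packed'' ordering or invariant as the fix, but you do not show it can be maintained under lower-semicomputable updates, and you explicitly flag this as ``the technical heart of the argument''. So your proposal is not a proof: the key step is asserted, not established.

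The paper avoids this difficulty altogether, and the way it does so is exactly the point of Theorem~\ref{thm:apriori}. Because $e_U(x,y)$ is the \emph{total} probability $\Pr_\pi[U(\pi,x)=y\text{ and }U(\pi,y)=x]$ — equivalently, the measure of the open set $U_{x,y}\subseteq\{0,1\}^\infty$, which need not be a single interval — one is free to allocate, in response to an increment $r$ in the lower approximation $u'(x,y)$, an \emph{arbitrary finite union of intervals} of total measure $r/2$ that avoids the previously used parts at $x$ and at $y$. The accounting is then trivial: the used measure at $x$ is $\le \tfrac12\sum_z u'(x,z)$, at $y$ it is $\le \tfrac12\sum_z u'(z,y)$, and since both sums stay $\le 1$ even after the $r$-increase the combined used measure is $\le 1-r$, leaving at least $r > r/2$ of free space, \emph{any} subset of which of measure $r/2$ may be taken. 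No packing invariant, no ordering of crossings, no single-interval constraint. This is precisely the feature that separates the a priori-probability version (this theorem) from the prefix-free/prefix-stable length versions (Theorem~\ref{th:main}), where the obstacle you identified is real and in fact fatal. In short: your approach is trying to prove a harder single-interval allocation statement than is needed, and that allocation problem is open in your write-up; replace it with union-of-intervals allocation and direct measure accounting and the proof closes.
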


\begin{proof}
Rewriting the right-hand side in the exponential scale, we need to prove that 
$$
\ee(x,y)=\min (\mm(x\cnd y),\mm(y\cnd x))
$$
up to $O(1)$-factors. One direction is obvious: $\ee(x,y)$ is smaller than $\mm(x\cnd y)$ since the set of $\pi$ in the definition of $\ee$ is a subset of the corresponding set for $\mm$, if we use the probabilistic definition of $\mm=m_U$. The same is true for $\mm(y\cnd x)$. 

The non-trivial part of the statement is the reverse inequality. Here we need to construct a machine $U$ such that 
$$
e_U(x,y)\ge \min (\mm(x\cnd y),\mm(y\cnd x))
$$
up to $O(1)$-factors.

Let us denote the right-hand side by $u(x,y)$. The function $u$ is symmetric, lower semicomputable and $\sum_y u(x,y)\le 1$ for all $x$ (due to the symmetry, we do not need the other inequality where $y$ is fixed). This is all we need to construct $U$ with the desired properties; in fact $e_U(x,y)$ will be at least $0.5u(x,y)$ (and the factor $0.5$ is important for the proof).

Every machine $U$ has a ``dual'' representation: for every pair $(x,y)$ one may consider the subset $U_{x,y}$ of the Cantor space that consists of all $\pi$ such that $U(\pi,x)=y$ and $U(\pi,y)=x$. These sets are effectively open (i.e., are computably enumerable unions of intervals in the Cantor space) uniformly in $x,y$, are symmetric ($U_{x,y}=U_{y,x}$) and have the following property: for a fixed $x$, all sets $U_{x,y}$ for all $y$ (including $y=x$) are disjoint.

What is important to us is that this correspondence works in both directions. If we have some family $U_{x,y}$ of uniformly effectively open sets that is symmetric and has the disjointness property mentioned above, there exists a prefix-free machine $U$ that generates these sets as described above. This machine works as follows: given some $x$, it enumerates the intervals that form $U_{x,y}$ for all $y$ (it is possible since the sets $U_{x,y}$ are effectively open uniformly in $x,y$). One may assume without loss of generality that all the intervals in the enumeration are disjoint. Indeed, every effectively open set can be represented as a union of a computable sequence of disjoint intervals (to make intervals disjoint, we represent the set difference between the last interval and previously generated intervals as a a finite union of intervals). Note also that for different values of~$y$ the sets $U_{x,y}$ are disjoint by the assumption. If the enumeration for $U_{x,y}$ contains the interval $[p]$ (the set of all extensions of some bit strings $p$), then we let $U(p,x)=y$ and $U(p,y)=x$ (we assume that the same enumeration is used for $U_{x,y}$ and $U_{y,x}$). Since all intervals are disjoint, the function $U(p,x)$ is prefix-free.

Now it remains (and this is the main part of the proof) to construct the family $U_{x,y}$ with required properties in such a way that the measure of $U_{x,y}$ is at least $0.5u(x,y)$. In our construction it will be \emph{exactly} $0.5u(x,y)$. For that we use the same idea as in~\cite{bglvz} but in the continuous setting. Since $u(x,y)$ is lower semicomputable, we may consider the increasing sequence $u'(x,y)$ of approximations from below (that increase with time, though we do not explicitly mention time in the notation) that converge to $u(x,y)$. We assume that at each step one of the values $u'(x,y)$ increases by a dyadic rational number $r$. In response to that increase, we add to $U_{x,y}$ one or several intervals that have total measure $r/2$ and do not intersect $U_{x,z}$ and $U_{z,y}$ for any $z$. For that we consider the unions of all already chosen parts of $U_{x,z}$ and of all chosen parts of $U_{z,y}$. The measure of the first union is bounded by $0.5\sum_z u'(x,z)$ and the measure of the second union is bounded by $0.5\sum_z u'(z,y)$ where $u'$ is the lower bound for $u$ before the $r$-increase. Since the sums remain bounded by $1$ after the $r$-increase, we may select a subset of measure $r/2$ outside both unions. (We may even select a subset of measure $r$, but this will destroy the construction at the following steps, so we add only $r/2$ to $U_{x,y}$.)
\end{proof}

\begin{remark}
As for the other settings, we may consider two functions $U$ and $U'$ and the probability of the event 
$$
e_{U,U'}(x,y)=\Pr_{\pi} [U(\pi,x)=y \text{ and } U'(\pi,y)=x]
$$
for those $U,U'$ that make this probability maximal. The equality of Theorem~\ref{thm:apriori} remains valid for this version. Indeed, the easy part can be proven in the same way, and for the difficult direction we have proven a stronger statement with additional requirement $U=U'$. 
\end{remark}

One can also describe the function $\ee$ as a maximal function in some class, therefore getting a quantitative definition of $E_0$. This is essentially the statement of theorem 4.2 in~\cite{bglvz}. In terms of semimeasures it can be reformulated as follows.

\begin{proposition}\label{prop:qpd}
Consider the class of symmetric lower semicomputable functions $u(x,y)$ with string arguments and non-negative real values such that $\sum_y u(x,y)\le 1$ for all $x$. This class has a maximal function that coincides with $\min(\mm(x\cnd y), \mm(y\cnd x))$ up to an $O(1)$ factor.
\end{proposition}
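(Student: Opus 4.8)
The plan is to avoid a separate existence argument by showing directly that the function $u_0(x,y)=\min(\mm(x\cnd y),\mm(y\cnd x))$ is already a maximal element of the stated class. So I would prove two things: (i) that $u_0$ itself belongs to the class, and (ii) that every member $u$ of the class satisfies $u(x,y)\le c\cdot u_0(x,y)$ for some constant $c$ independent of $x,y$.

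For (i): symmetry of $u_0$ is immediate. Lower semicomputability holds because $\mm(x\cnd y)$ is lower semicomputable jointly in its arguments, hence so is $\mm(y\cnd x)$ (the same function with the arguments swapped), and the pointwise minimum of two lower semicomputable functions is lower semicomputable --- approximate each from below by an increasing computable sequence and take the minimum of the approximations. The summability condition follows from $\sum_y u_0(x,y)\le\sum_y\mm(x\cnd y)\le 1$, valid for every $x$ because $\mm(\cdot\cnd y)$ is a conditional semimeasure (and by symmetry $\sum_x u_0(x,y)\le 1$ as well, though this is not required).

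For (ii): let $u$ be in the class and fix $x$. Then $y\mapsto u(x,y)$ is a lower semicomputable semimeasure whose approximations are computable uniformly in $x$, so by the maximality of $\mm$ as a conditional semimeasure we get $u(x,y)\le c\cdot\mm(y\cnd x)$ with $c$ independent of $x$ and $y$. Applying the same bound to the pair $(y,x)$ and using $u(x,y)=u(y,x)$ gives $u(x,y)\le c\cdot\mm(x\cnd y)$; taking the minimum of the two upper bounds yields $u(x,y)\le c\cdot u_0(x,y)$. Combined with Theorem~\ref{thm:apriori}, this also shows that the maximal function coincides with $\ee$ up to an $O(1)$ factor, which is the quantitative description of $E_0$ advertised above.

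The step I expect to require the most care is the uniformity of the constant $c$ in (ii): one must use the form of the maximality of $\mm$ that yields a single multiplicative constant valid for all conditions simultaneously, not merely unconditional maximality applied separately for each fixed $x$. This is exactly where the joint (uniform in $x$) lower semicomputability in the definition of the class is used. Apart from that the argument is routine, and in particular no new machine has to be constructed: the machine realizing $u_0$ was already built in the proof of Theorem~\ref{thm:apriori}.
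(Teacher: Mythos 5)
Your proposal is correct and follows essentially the same route as the paper's one-line argument: verify that $\min(\mm(x\cnd y),\mm(y\cnd x))$ lies in the class, then bound any member $u$ by each conditional semimeasure separately (using symmetry once) and take the minimum. Your closing remark about needing the uniform-in-condition form of $\mm$'s maximality is the right caveat and is implicit in the paper's phrasing.
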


(Indeed, we have already seen that this minimum has the required properties; if some other function $u(x,y)$ in this class is given, we compare it with conditional semimeasures $\mm(x\cnd y)$ and $\mm(y\cnd x)$ and conclude that $u$ does not exceed both of them.)

In logarithmic scale this statement can be reformulated as follows: \emph{the class of upper semicomputable symmetric functions $D(x,y)$ with string arguments and real values such that $\sum_y 2^{-D(x,y)}\le 1$ for each $x$, has a minimal element that coincides with $\max(\KP(x\cnd y),\KP(y\cnd x))$ up to an $O(1)$ additive term}. Theorem 4.2 in~\cite{bglvz} says the same with the additional condition for $D$: it should satisfy the triangle inequality. This restriction makes the class smaller and could increase the minimal element in the class, but this does not happen since the function
$$
\max (\KP(x\cnd y),\KP(y\cnd x))+c
$$
satisfies the triangle inequality for large enough $c$. This follows from the inequality $\KP(x\cnd z)\le \KP(x\cnd y)+\KP(y\cnd z)+O(1)$ since the left hand size increases by $c$ and the right hand size increases by $2c$ when $\KP$ is increased by $c$.

\begin{remark}
To be pedantic, we have to note that in~\cite{bglvz} an additional condition $D(x,x)=0$ is required for the functions in the class; to make this possible, one has to exclude the term $2^{-D(x,x)}$ in the sum (now this term equals $1$) and require that $\sum_{y\ne x} 2^{-D(x,y)}\le 1$ (p.~1414, the last inequality). Note that the triangle inequality remains valid if we change $D$ and let $D(x,x)=0$ for all $x$.
\end{remark}

\section{A counterexample}\label{sec:game}

In this section we present the main negative (and most technically difficult) result of this paper that shows that none of the four prefix distances coincides with $$E_1(x,y)=\max(\KP(x\cnd y),\KP(y\cnd x)).$$

\begin{theorem}\label{th:main}
The bipartite prefix-stable information distance exceeds $E_1(x,y)$ more than by a constant: the difference is unbounded.
\end{theorem}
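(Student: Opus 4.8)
The plan is to turn the statement into an online combinatorial game and win that game by a recursion‑theorem argument. First I would unwind the bipartite prefix‑stable definition from Section~\ref{subsec:four-versions}: a machine $(U,U')$ is the same data as two families of prefix‑free codes, a ``forward'' code at each left string $x$ and a ``backward'' code at each right string $y$, each enumerated monotonically, and $E_{U,U'}(x,y)$ is the length of the shortest string that is simultaneously a forward codeword for the value $y$ at $x$ and a backward codeword for the value $x$ at $y$. By Proposition~\ref{prop:qpd} it then suffices to produce, for each constant $c$, a symmetric lower‑semicomputable $u(x,y)$ with $\sum_y u(x,y)\le 1$ for all $x$, together with a pair $(x,y)$, such that the optimal bipartite prefix‑stable distance at $(x,y)$ is larger than $-\log u(x,y)+c$; since Proposition~\ref{prop:qpd} gives $E_1(x,y)\le -\log u(x,y)+O(1)$, the gap is then at least $c-O(1)$, and letting $c\to\infty$ proves the theorem.

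Next I would set this up as a game. An adversary enumerates such a $u$ while watching the enumeration of the (fixed) optimal bipartite prefix‑stable machine, i.e.\ watching its forward and backward codes grow, and tries to reach a position with a pair $(x,y)$ at which $u(x,y)$ is ``large'' but the machine's shortest common codeword for $(x,y)$ is longer than $-\log u(x,y)+c$. Using the recursion theorem the adversary may build her $u$ knowing $c$ and also a constant $d$ for which $\mm(\cdot\cnd\cdot)$ dominates the resulting $u$ up to a factor $2^{d}$; then reaching such a position contradicts the hypothesis that the distance is everywhere at most $E_1+(c-d-O(1))$. So it is enough to exhibit, for each $c$, a \emph{computable} adversary strategy that always reaches a winning position.

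For the strategy I would exploit two things the machine cannot avoid: it must honour its promise on \emph{every} pair, and it must commit a common codeword online, whereas the forward code at $x$ and the backward code at $y$ are constructed independently, so one can try to drive them out of phase. Concretely I would build a recursive gadget around a target pair $(x^{*},y^{*})$ whose weight $u(x^{*},y^{*})$ is of moderate size, with auxiliary strings and nested smaller copies of the gadget attached both at $x^{*}$ and at $y^{*}$; the auxiliary weights would force the machine to occupy the short part of the forward code at $x^{*}$ and of the backward code at $y^{*}$ in incompatible ways \emph{before} it is compelled to realize $(x^{*},y^{*})$, so that the shortest common codeword for $(x^{*},y^{*})$ is forced to be much longer than $-\log u(x^{*},y^{*})$. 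The auxiliary weights would decay geometrically with recursion depth so that everything assembles into a single $u$ with $\sum_y u(x,y)\le 1$; to defeat slack $c$ the recursion has to be carried to depth growing with $c$.

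The hard part, and, I expect, exactly the point where the argument of~\cite{mahmud} breaks, is the combinatorial verification that the gadget really corners the machine: the machine must not be able to escape by clustering its auxiliary codewords into a still‑unused subtree, nor by soaking up the slack $c$ with a fixed‑length greedy assignment (which absorbs slack cheaply in any single code). Ruling this out seems to require a delicate simultaneous accounting of the Kraft weight available at every vertex of the gadget, so that $E_1$ of the target stays small while the forced common codeword grows. Once that is in place, the adversary strategy is plainly computable, so by the recursion‑theorem step it beats the optimal bipartite prefix‑stable machine for every $c$; hence the difference between the bipartite prefix‑stable information distance and $E_1(x,y)$ is unbounded, and since this is the smallest of the four prefix distances, the same conclusion propagates to all of them (including the prefix‑free non‑bipartite version of~\cite{bglvz,mahmud}).
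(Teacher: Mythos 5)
Your reduction is sound and, at the skeleton level, matches the paper: pass to the Cantor-space picture of the bipartite prefix-stable machine, invoke Proposition~\ref{prop:qpd} to bound $E_1$ by $-\log u+O(1)$ for any symmetric lower-semicomputable $u$ with row sums $\le 1$, set up an online game between an adversary building $u$ and the fixed optimal machine, and observe that a uniformly computable adversary winning for every slack $c$ yields the theorem. (One small difference: the paper avoids the recursion theorem entirely by running the adversary's strategies for all $k$ simultaneously against a ``blind'' Bob and summing the resulting weight families into a single semimeasure, which is cleaner than fixing $d$ by self-reference. Also, for the prefix-\emph{stable} version the machine does not literally give prefix-\emph{free} codes; what one really uses is the family of effectively open sets $M_{x,y}$ that are disjoint over $y$ for fixed $x$ and over $x$ for fixed $y$, which is the form the paper works with.)

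The genuine gap is that you never construct, let alone verify, the adversary's winning strategy, and you acknowledge this (``the hard part\dots is the combinatorial verification that the gadget really corners the machine\dots Once that is in place\dots''). That verification is the entire content of the theorem. Moreover, the gadget you gesture at --- nested copies around a target pair $(x^*,y^*)$, with auxiliary weights decaying geometrically with depth, trying to push the forward code at $x^*$ and the backward code at $y^*$ ``out of phase'' --- does not match the mechanism that actually works and is not obviously salvageable. The paper's strategy does not single out a target pair in advance, does not use decaying weights, and does not rely on a timing mismatch. Instead Alice runs $N=2/d$ stages with \emph{increasing} request sizes $\eps_0\ll\eps_1\ll\cdots\ll\eps_N=d/2$, and at stage $i$ uses ``stars'' centred at soon-to-be-discarded vertices to force Bob to place $\eps_{i-1}$-intervals that, viewed at granularity $\eps_i$, become new ``everywhere dirty'' intervals; a pigeonhole over many stars lets her keep a large set of active vertices sharing the same new dirty $\eps_i$-interval. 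The accounting is that the measure of everywhere-dirty space grows by $d/2$ per stage while Alice's per-vertex expenditure stays below $d/2$ (amplification factor $\eps_i/\eps_{i-1}=2/d$), so after $N$ stages Bob has no room left for a final request. The bipartite version only changes the bookkeeping (two Kraft constraints, dirtiness measured on one side). Without an argument of this kind --- in particular, without the granularity amplification that turns Bob's small commitments into large unavailable regions --- your sketch does not establish that Bob actually runs out of space, and so does not prove the theorem.
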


As we have mentioned, the other three versions of the information distance are even bigger, so the same result is true for all of them. We will explain the proof for the non-bipartite prefix-stable version (it is a bit easier and less notation is needed) and then explain the changes needed for the bipartite prefix-stable version.
Our proof also provides a lower bound in terms of the length: for strings of length $n$, the difference can be as large as
\[
\log \log n - O(\log \log \log n).
\]

The proof uses the game approach (see~\cite{muchnik-vyugin,muchnik-vereshchagin} for the general context, but the proof is self-contained). In the next section (\ref{subsec:game-enough}) we explain the game rules and prove that a computable winning strategy in the game implies that the difference is unbounded, and then (in Section~\ref{subsec:game-strategy}) we explain the strategy. Finally (in Section~\ref{subsec:game-bipartite}) we discuss the modifications needed for the bipartite case.

\section{Equality if the distance is superlogarithmic}

Given the previous result, all distances become equal for pairs of strings of equal length, provided their distance is not too small.

\begin{theorem}\label{th:informationDistanceEquality}
  If $|x|=|y|$ and $E_1(x,y) \ge 6\log |x|$, then all four prefix information distances are equal to $E_1(x,y)+O(1)$.
\end{theorem}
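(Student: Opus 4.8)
The plan is to show that the largest of the four distances, the non-bipartite prefix-free information distance $E_0(x,y)$, is at most $E_1(x,y)+O(1)$ whenever $|x|=|y|$ and $E_1(x,y)\ge 6\log|x|$; since $E_1$ is a lower bound for all four versions (and the four versions are ordered with $E_0$ on top), this gives equality up to $O(1)$ for all of them. The starting point is the adapted \cite{bglvz} construction from the proof of Theorem~\ref{thm:bglvz}: for each $n$ form the symmetric enumerable graph $R_n$ of pairs $(x,y)$ with $\KP(x\cnd y)<n$ and $\KP(y\cnd x)<n$, whose vertices have degree $<2^n$, and edge-color it properly with $2^{n+1}$ colors, using $(n+1)$-bit color names as programs. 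The only defect of this construction for the prefix setting is that the resulting machine is not prefix-free: the program for an edge in $R_n$ has length $n+1$, and nothing prevents the program for an edge in $R_{n'}$ with $n'>n$ from extending it. So the real task is to rebuild this coloring so that the family of matchings $M_p$ (where $p$ is the color of edge $(x,y)$ in the round-$n$ coloring) is a \emph{prefix-free} family — i.e. the matchings for comparable strings $p,q$ share no vertices.

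First I would restrict attention to strings of a fixed length $\ell$ and work level by level in $n$. For strings of length $\ell$ the edge $(x,y)\in R_n$ is only interesting when $n\le\ell+O(1)$ in the relevant range, and more importantly we only care about $n\ge 6\log\ell$, because for smaller $n$ the theorem claims nothing. The idea is: when assigning a program to an edge $(x,y)$ that newly enters $R_n$, use a program of length $n+c\log\ell$ rather than $n+O(1)$, spending the extra $c\log\ell$ bits to record $\ell$ (or $\log\ell$, iterated) and a small "generation" counter, so that programs issued at different levels $n$, or for different $\ell$, automatically live in incomparable parts of the binary tree. Concretely, reserve for level-$n$, length-$\ell$ edges the set of strings of the form $0^{?}1\,\sigma(\ell)\,\tau$ where $\sigma(\ell)$ is a self-delimiting encoding of $\ell$ of length $2\log\ell+O(1)$ and $\tau$ ranges over $(n+2)$-bit blocks used as the actual edge color; distinct $\ell$ give incomparable $\sigma(\ell)$, and within a fixed $\ell$ a length prefix $1^{k}0$ separates the levels $n$. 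Because the degree bound in $R_n$ is still $<2^n$, $2^{n+2}$ colors $\tau$ suffice for a proper edge coloring at each level, done in parallel with the disjointness-across-levels now enforced syntactically. The total length used for a level-$n$ edge is $n+O(\log\ell)$, and with the numerology $\ell=|x|$, $n\le E_1(x,y)+O(1)$, $6\log|x|\le E_1(x,y)$ this is $E_1(x,y)+O(\log|x|)=E_1(x,y)+O(1)\cdot\frac{E_1(x,y)}{\log|x|}$ — which is \emph{not} $O(1)$, so this naive accounting is exactly the obstacle and must be fixed.

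The fix, and the technically delicate heart of the argument, is to charge the $O(\log\ell)$ overhead only \emph{once per pair}, not once per level. Since the distance $E_0$ we are bounding is $\max_{\text{over the single relevant }n}$, not a sum, we may simply issue, for the pair $(x,y)$, a single program at the level $n^\ast=E_1(x,y)+1$ where the edge first appears, of length $n^\ast+O(\log\ell)$, and \emph{not} issue shorter programs at intermediate levels at all for the bounding machine — the machine only needs to realize $(x,y)$ once. Prefix-freeness must then be arranged between pairs sharing a vertex: if $x$ is incident to edges realized at various lengths $n_1^\ast+O(\log\ell)<n_2^\ast+O(\log\ell)<\cdots$, their programs must be pairwise incomparable. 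This is where the per-vertex degree bound $<2^{n}$ and a Kraft-type budget come in: for vertex $x$, the programs of its incident edges at "realized level about $n$" occupy at most $2^{n}$ leaves at depth $n+O(\log\ell)$, contributing $\le 2^{-O(\log\ell)}=\ell^{-O(1)}$ to $x$'s Kraft sum; summing the geometric series over all levels $n\le \ell+O(1)$ keeps $x$'s total weight below $1$ with room to spare, so a greedy online assignment of incomparable program-intervals to $x$'s incident edges (respecting the symmetric requirement for $y$ simultaneously, exactly as in the coloring argument — each new edge must dodge $<2^n$ intervals at $x$ and $<2^n$ at $y$) never runs out of room. Translating "online interval assignment into a prefix-free tree, symmetric in the two endpoints, with a per-vertex $2^n$ conflict bound" into the matching-family language of Section~\ref{subsec:four-versions} is the bookkeeping core of the proof; once it is in place, the machine $U$ defined by $U(p,x)=y$ whenever the interval $p$ was assigned to edge $(x,y)$ is prefix-free and witnesses $E_0(x,y)\le E_1(x,y)+O(1)$ in the stated range, and comparison with the optimal prefix-free distance machine yields the theorem. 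The hypothesis $E_1(x,y)\ge 6\log|x|$ is precisely what makes the $O(\log|x|)$ coding overhead absorbable into $O(1)$ relative to $E_1$; this is also consistent with Theorem~\ref{th:main}, which shows the overhead genuinely cannot be removed when $E_1$ is below the logarithmic threshold.
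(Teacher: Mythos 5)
Your argument has a genuine gap, and it is precisely the point where you claim the hypothesis $E_1(x,y)\ge 6\log|x|$ ``makes the $O(\log|x|)$ coding overhead absorbable into $O(1)$ relative to $E_1$.'' That claim is false: if the program length produced by your construction is $E_1(x,y)+c\log|x|$ and $E_1(x,y)\ge 6\log|x|$, then the length is at most $(1+c/6)\,E_1(x,y)$, i.e.\ the overhead is small \emph{as a fraction of} $E_1$, but it is still $\Theta(\log|x|)$ as an additive term — it is not $O(1)$. Look also at the Kraft accounting you sketch for vertex $x$: level $n$ contributes $2^n\cdot 2^{-(n+O(\log\ell))}=\ell^{-O(1)}$, and you ``sum over all levels $n\le\ell+O(1)$'' — that is $\Theta(\ell)$ terms each of size $\ell^{-c}$, which is bounded only if $c>1$, i.e.\ only if the overhead is at least $\log\ell$ plus a bit. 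So the Kraft-sum argument, when it closes, \emph{forces} programs of length $E_1(x,y)+\Theta(\log|x|)$, which is exactly the conclusion you correctly flagged earlier in the paragraph as ``not $O(1)$.'' The ``fix'' of issuing a single program per pair at the realized level $n^\ast$ does not remove the $O(\log\ell)$ header — it is still there in the length — so the obstacle you identified is not overcome, only restated.

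The paper's actual proof is of a different character and does not proceed by prepending a self-delimiting header to a per-level edge colouring. It is a game argument (the same game as in the proof of Theorem~\ref{th:main}, with $c=1$), where Bob must win with no slack at all: for each request of size $\varepsilon$ he must allocate an interval of size $\varepsilon$, not $\varepsilon/\mathrm{poly}(\ell)$. The two hypotheses of the theorem are translated into restrictions on Alice's requests: (A) all requests have size $\ge 2^{-n}$ (only $n$ distinct sizes occur), and (B) all requests have size $\le n^{-6}$ (this encodes $E_1\ge 6\log n$). Bob's winning strategy then partitions each Cantor space $\Omega_u$ into $\ell=2^7n^4$ blocks grouped into $r=2n$ regions, where the grouping is chosen \emph{randomly per string} so that an expander-like combinatorial property holds (Lemma~\ref{lem:expanderlike}, proved by the probabilistic method with Chernoff bounds). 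Regions are assigned to request sizes and retired when full; requests are allocated greedily within a block that is non-full in both endpoints' partitions. Crucially, when no such block exists, the request is not dropped — one endpoint is ``blamed,'' and the second item of the expander lemma guarantees each string is blamed only $O(n^3)$ times. A second sub-strategy, using the other half of the Cantor space with ``dominant'' and ``submissive'' blocks (Lemma~\ref{lem:pairwiseHammingDistances}), handles all blamed requests; hypothesis (B) with $p=6$ is what makes the total measure of a string's blamed requests, $O(n^{3-p})=O(n^{-3})$, small enough to fit inside a single block of size $\Theta(n^{-2})$. The additive $O(1)$ precision emerges because the entire allocation is done with a fixed constant $d$ of slack, independent of $n$; no $\log n$ header ever appears on any program. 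This is the part your proposal is missing: some mechanism that avoids paying a $\log n$-length prefix tax per program, rather than trying to amortize it away.
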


This seems to be the first equality in information theory whose precision becomes smaller if the quantity becomes larger.
The proof is given in the second appendix.
%rather complex, and can be obtained from the arXiv version of this paper~\cite{arxivThis}.

%\subsection*{Acknowledgements}

\appendix
\section{Proof of Theorem~\ref{th:main} }

\subsection{It is enough to win a game}\label{subsec:game-enough}

Consider the following two-player full information game. Fix some parameter $c$, a positive rational number. The game field is the complete graph on a countable set (no loops); we use binary strings as graph vertices.  Alice and Bob take turns. 

Alice increases \emph{weights} of the graph edges. We denote the weight of the edge connecting vertices $u$ and $v$ by $m_{u,v}$ (here $u\ne v$). Initially all $m_{u,v}$ are zeros. At each move Alice may increase weights of finitely many edges using rational numbers as new weights. The weights should satisfy the inequality $\sum_{v\ne u} m_{u,v}\le 1$ for every $u$ (the total weight of the edges adjacent to some vertex should not exceed~$1$).

Bob assigns some subsets of the Cantor space to edges. For each $u,v$ (where $u\ne v$) the set $M_{u,v}$ assigned to the edge $u$--$v$ is a clopen subset of the Cantor space (clopen subsets are subsets that are closed and open at the same time, i.e., finite unions of intervals in the Cantor space). Initially all $M_{u,v}$ are empty. At each move Bob may increase sets assigned to finitely many edges (using arbitrary clopen sets that contain the previous ones). For every $u$, the sets $M_{u,v}$ (for all $v\ne u$) should be disjoint.

The game is infinite, and the winner is determined in the limit (assuming that both Alice and Bob follow the rules). Namely, Bob wins if for every $u$ and $v$ (where $u\ne v$) the limit value $\lim M_{u,v}$ (the union of the increasing sequence of Bob's labels for edge $u$--$v$) contains an interval in the Cantor space whose size is at least $c\cdot \lim m_{u,v}$ (the limit value of Alice's labels for $u$--$v$, multiplied by $c$). Recall that the interval $[z]$ in the 
   % I think it should be "the Cantor space" or "Cantor's space"
Cantor space is the set of all extensions of some string~$z$, and its size is $2^{-|z|}$. In the sequel the size of the maximal interval contained in $X$ is denoted by $\nu(X)$.

We claim that the existence of a computable (uniformly in $c$) winning strategy for Alice in this game is enough to prove Theorem~\ref{th:main}. But first let us make some remarks on the game rules.

\begin{remark}
Increasing the constant $c$, we make Bob's task more difficult, and Alice's task easier. So our claim says that Alice can win the game even for arbitrarily small (positive) values of $c$.
\end{remark}

\begin{remark}
In our definition the result of the game is determined by the limit values of $m_{u,v}$ and $M_{u,v}$, so both players may postpone their moves. Two consequences of this observation will be used. First, we may assume that Bob always has empty $M_{u,v}$ when $m_{u,v}=0$ (he may 
   % "when" because it is the dynamic statement about current values. This is needed in the argument when we claim that Bob has to react to Alice's move at the next stage of the game (if some additional intervals were used when they are not needed, then they will be counted when the neighborhood after the game stage is computed...
postpone his move). Second, we may assume that Bob has to satisfy the requirement $\nu(M_{u,v})\ge cm_{u,v}$ after each of his moves. Indeed, Alice may wait until this requirement is satisfied by Bob: if this never happens, Alice wins the game in the limit (due to compactness: if an infinite family of intervals covers some large interval in the Cantor space, a finite subfamily exists that covers it, too).
\end{remark}

Now let us assume that Alice has a (uniformly) computable strategy for winning the game for every $c>0$. Since the factor $c$ is arbitrary, we may strengthen the requirement for Alice and require $\sum_{v\ne u} m_{u,v}\le d$ for some $d>0$. This corresponds to the factor $cd$ in the original game. Given some integer $k>0$, consider Alice's winning strategy for $c=2^{-k}$ and $d=2^{-k}$. We  play all these strategies simultaneously 
  % we need to combine the strategies
against a ``blind'' strategy for Bob that ignores Alice's moves and just follows the optimal machine $U$ used in the definition of information distance. Here are the details.

Consider the function $U$ that makes the function 
$$
E_U(u,v) = \min\{ |p|\colon U(p,u) = v \text{ and } U(p,v)=u\}
$$
minimal. For each edge $u$--$v$ consider the union of the sets $[p]$ for all $p$ such that $U(p,u)=v$ and $U(p,v)=u$ at the same time. This union is an effectively open set, and Bob enumerates the corresponding intervals and adds 
  % it should be indeed "adds"
them to the label for the edge $u$--$v$ when they appear in the enumeration. (Note that this set is the same for $(u,v)$ and $(v,u)$ by definition.) For the limit set $M_{u,v}$ we then have $\nu(M_{u,v})\ge 2^{-E_U(u,v)}$ by construction (consider the interval that corresponds to the shortest $p$ in the definition of $E_U(u,v)$).

Let Alice use her winning strategy (for $c=2^{-k}$ and $d=2^{-k}$) against Bob. Since Bob's actions and Alice's strategy are computable, the limit values of Alice's weights are lower semicomputable uniformly in $k$. Let us denote these limit values by $m^k_{u,v}$ (for the $k$th game). We know that for every $u$ and $k$ the sum $\sum_{v\ne u} m^k_{u,v}$ does not exceed $2^{-k}$. Therefore the sum 
$$
 m_{u,v} = \sum_k m^k_{u,v}
$$
satisfies the requirement
$$
\sum_{v\ne u} m_{u,v} \le 1
$$
and we can apply Proposition~\ref{prop:qpd}, where we let $m_{u,u}=0$. This proposition guarantees that
$$
m_{u,v} \le O(\min(\mm(u\cnd v),\mm(v\cnd u))=2^{-E_1(u,v)+O(1)}.
$$
If, contrary to the statement of Theorem~\ref{th:main}, the value of the prefix-stable (non-bipartite) information distance between $u$ and $v$ is bounded by $E_1(u,v)+O(1)$, then $E_1(u,v)$ in the right hand side of the last inequality can be replaced by $E_U(u,v)$. But this means, by our construction, that Bob wins the $k$th game for large enough $k$, since the maximal intervals in $M_{u,v}$ are large enough to match $m_{u,v}$ (and therefore $m^k_{u,v}$) for large enough $k$, according to this inequality. We get a contradiction that finishes the proof of Theorem~\ref{th:main} for the non-bipartite case, assuming the existence of a uniformly computable winning strategy for Alice.

\subsection{How to win the game}\label{subsec:game-strategy}

Now we present a winning strategy for Alice. It is more convenient to consider an equivalent version of the game where Alice should satisfy the requirement $\sum m_{u,v}\le d$ (for some constant $d$; we assume without loss of generality that $d$ is a negative power of $2$)  and Bob should match Alice's weights without any factor, i.e., satisfy the requirement $\nu(M_{u,v})\ge m_{u,v}$. We need to show that even for small values of $d$ Alice has a winning strategy.

The idea of the strategy is that Alice maintains a finite set of ``currently active'' vertices, initially very large and then decreasing. The game is split into $N$ stages where $N=2/d$ (as we will see, this is enough). After each stage the set of active vertices and the edge labels satisfy the following conditions:
\begin{itemize} 
\item Alice has zero weights on edges that connect active vertices (as we have said, we may assume without loss of generality that Bob has empty labels on these edges, too);
   % relevant because the corresponding intervals are not already counted in the dirty space (and the remark about Bob is used for the same purpose)
\item for each active vertex, only a small weight is used by Alice on edges that connect it to other vertices (inactive ones; edges to active ones are covered by the previous condition and do not carry any weight); this weight will never exceed $d/2$;
\item more and more space is unavailable to Bob for use on edges between active vertices, since it is already used on edges connecting active and inactive vertices. 
\end{itemize}
The amount of unavailable space (for Bob) grows from stage to stage until no more space is available and Alice wins. In fact, at each stage the amount of unavailable space grows by $d/2$, so Alice needs $N=2/d$ stages to make all the space unavailable for Bob; then she makes one more request and wins since Bob has no available space to fulfull this request.

In the previous paragraph we used the words ``unavailable space'' informally. What do we mean by unavailable space? Consider some active vertex $x$ and edges that connect it to inactive vertices. These edges have some of Bob's labels (subsets of the Cantor space). The part of the Cantor space occupied by these labels is not available to Bob for edges between $x$ and other active vertices. Moreover, if Alice requests an interval of size $\eps$, and some part (even a small one) of an interval of this size is occupied, then this interval cannot be used by Bob (is unavailable). In this way the unavailable space can be much bigger than the occupied space, and this difference is the main tool in our argument.\footnote{This type of accounting goes back to G\'acs' paper~\cite{gacs1983} where he proved that monotone complexity and continuous a priori complexity differ more than by a constant, see also~\cite{usv} for the detailed exposition of his argument.} 

Let us explain this technique. First, let us agree that Alice increases only zero weights, and the new non-zero weight
   % weight, not weights: I want to say that the amount is a function of stage only
she uses depends on the stage only. At the first stage she uses some very small $\eps_0$, at the second stage she uses some bigger $\eps_1$, etc. (so at the $i$th stage weights $\eps_{i-1}$ are used). We will use values of $\eps_i$ that are powers of~$2$ (since interval sizes in the Cantor space are powers of~$2$ anyway), and assume that $\eps_0\ll \eps_1\ll \eps_2\ldots$. More precisely, we let $\eps_N=d/2$ and assume that $\eps_{i-1}/\eps_i = d/2$. 
\begin{center}
\includegraphics[scale=1]{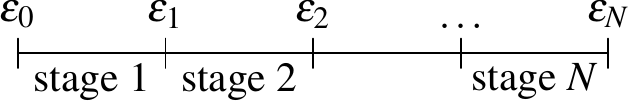}
\end{center}
This commitment about the weights implies that, starting from the $(i+1)$th stage, only the $\eps_i$-neighborhood of the space used by Bob matters. Here by $\eps$-neighborhood (where $\eps$ is a negative power of $2$) of a subset $X$ of the Cantor space we mean the union of all intervals of size $\eps$ that have nonempty intersection with $X$; note that the $\eps$-neighborhood of $X$ increases when $\eps$ increases (or $X$ increases).  

More precisely, let us call an interval \emph{dirty for active vertex~$x$} (at some moment) if some part of this interval already appears in Bob's labels for edges that connect $x$ to inactive vertices. This interval cannot be used later by Alice. After stage $i$, we consider all the intervals of size $\eps_i$ that are ``everywhere dirty'', i.e., dirty for all vertices (those that are dirty for some active vertices but not for the others, do not count). The everywhere dirty intervals form the \emph{unavailable space after stage $i$}, and the total measure of this space increases at least by $d/2$ at each stage. In other terms, after stage $i$ we consider for every active vertex $x$ the space allocated by Bob to all edges connecting $x$ with (currently) inactive vertices, and the $\eps_i$-neighborhood of this space. The intersection of these neighborhoods for all active vertices $x$ is the unavailable space (after stage $i$). 

After stage $i$ the total size of unavailable space will be at least $i/N$ (recall that $N=2/d$). At the end (after the $N$th stage) we have $\eps_N=d/2$, so the total size of everywhere dirty intervals of size $d/2$ is $N/N=1$, while the total weight used by Alice at any vertex is $d/2$. Then Alice makes one more request with weight $d/2$ and wins. Of course, we need that at least two vertices remain active after stage $N$, and this will be guaranteed if the initial number of active vertices is large enough.

The picture above places $\eps_i$ between stages since $\eps_i$ is used for accounting after stage $i$ and before stage $i+1$.

\medskip

It remains to explain how Alice plays at stage $i$ using requests of size $\eps_{i-1}$ and creating (new) everywhere dirty intervals of size $\eps_i$ with total size (=the size of their union) at least $d/2$. This happens in several substages; each substage decreases the set of active vertices and increases the set of everywhere dirty intervals of size $\eps_i$ (for the remaining active vertices).

Before starting each substage, we look at two subsets of the Cantor space:
\begin{itemize}
\item[(a)] the set of intervals of size $\eps_{i-1}$ that were everywhere dirty after the previous stage;
\item[(b)] the set of intervals of size $\eps_i$ that are everywhere dirty now (after the substages that are already performed).
\end{itemize}
The second set is bigger for two reasons. First, we changed the granularity (recall the $\eps_i$-neighborhood of some set can be bigger than $\eps_{i-1}$-neighborhood). Second, the previous substages create new everywhere dirty intervals of size $\eps_i$. Our goal is to make the second set larger than the first one; the required difference in size is $d/2$. If this goal is already achieved, we finish the stage (no more substage are necessary). If not, we initiate a new substage that creates a new everywhere dirty $\eps_i$-interval.

\begin{center}
\vbox{%
\begin{center}%
\includegraphics[scale=1]{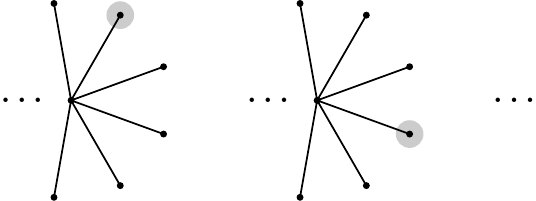}\\[1ex]
Alice's strategy for a substage
\end{center}%
}%
\end{center}

The key idea is that if Alice makes requests for all edges of a large star, she may use a lot of weight for the central vertex (the sum of the weights could be up to $d/2$, since in our process the total weight on edges that connect some active vertex to inactive ones, never exceeds $d/2$, and the maximal total weight is $d$). Still for all other vertices of the star only one new edge of non-zero weight $\eps_{i-1}$ is added, and the central vertex will be made inactive after the substage. Bob has to allocate some intervals of size at least $\eps_{i-1}$ for every edge in the star, and these intervals should be disjoint (due to the restrictions for the center of the star). 
The total measure of these intervals is at least $d/2$, and all of them are outside the zone (a). Therefore, since the goal is not yet achieved, one of these new intervals used by Bob is also outside the zone (b). 

Alice does the same for many stars (assuming that there are enough active vertices) and gets many new $\eps_i$-intervals outside the (b)-zone (one per star). Some of them have to coincide: if we started with many stars, we may select many new active vertices that have the same new $\eps_i$-dirty interval. Making all other vertices inactive, we get a smaller (but still large if we started with a large set of active vertices) set of active vertices and a new everywhere dirty $\eps_i$-interval. The goal of a substage is achieved, and we may look again at the set of everywhere dirty $\eps_i$-intervals (with new interval added) to decide whether the difference between (b) and (a) is now enough ($d/2$) or a new substage is needed. The maximal number of substages needed to finish the stage is $(d/2)/\eps_i$, since each substage creates a new $\eps_i$-interval.

The same procedure is repeated for all $N$ stages. We need to check that Alice does not violate her obligation on the sum of weights connecting some active vertex to all inactive vertices. For that, we look at the ``amplification factor'': in the construction Alice uses a new weight $\eps_{i-1}$ (for every new active vertex) to get a dirty interval of size $\eps_i$, therefore the amplification factor is $\eps_i/\eps_{i-1}=2/d$. Since the total size of dirty intervals is at most $1$, the total weight used by Alice (for each active vertex) never exceeds $d/2$, as required.

It remains to explain why Alice can choose enough active vertices in the beginning, so she will never run out of them in the construction and at least two active vertices exist at the end (so the last request $d/2$ for the edge connecting them wins the game). Indeed, the backwards induction shows that for each substage of each stage there is some finite number of active vertices that is sufficient for Alice to follow her plan till the end. If we want to upper bound the length on the strings where a given difference between two quantities in the statement of Theorem~\ref{th:main} is  achieved, we need to compute this number explicitly. But the qualitative statement (the unbounded difference) is already proven for the prefix-stable non-bipartite case. The prefix-free case is a corollary (the distance becomes bigger), but for the bipartite case we need to adapt the argument, and this is done in the next section.

\subsection{Modifications for the bipartite case}\label{subsec:game-bipartite}

In the bipartite case the game should be changed. Namely, we have a complete bipartite graph where left and right parts contain all strings. Alice increases weights on edges; for each vertex (left or right) the sum of the weights for all adjacent edges should not exceed some $d$ (the parameter of the game). In other terms, at each step Alice's weights form a two-dimensional table $m_{x,y}$ indexed by pairs of strings $x$ and $y$, all entries are zeros except for finitely many positive rational numbers, and 
$$
\forall x\, \left(\sum_y m_{x,y}\le 1\right), \quad
\forall y\, \left(\sum_x m_{x,y}\le 1\right)
$$
(now we have two requirements since the table is not symmetric anymore; note that the diagonal entries $m_{x,x}$ do not have special status).

Bob replies by assigning increasing sets $M_{x,y}$ to edges such that $\nu(M_{x,y})\ge m_{x,y}$. For each $x$ the sets $M_{x,y}$ (with different $y$) should be disjoint; the same should be true for sets $M_{x,y}$ for fixed $y$ and different $x$.

Again, to prove that the bipartite prefix-free information distance exceeds $E_1(x,y)=\max(\KP(x\cnd y),\KP(y\cnd x))$ by a constant, we show that for every $d$ Alice has a computable (uniformly in $d$) winning strategy in this game. Then we consider games with total weight $2^{-k}$ and factor condition $\nu(M_{x,y})\ge 2^{-k}m_{x,y}$ and let Alice play her winning strategy against the ``blind'' strategy for Bob that (for the edge $x$--$y$) enumerates all intervals $[p]$ such that $U(p,x)=y$ and $U'(p,y)=x$ at the same time.

The winning strategy for Alice works in stages as before, and the request size grows with the stage in the same way.  Alice keeps the list of active vertices (both in the left and the right part), and after each stage (and substage) all weights on edges between (left and right) active vertices are zeros, and sum of Alice's weights on edges between each active vertex and all inactive vertices is small. After the $i$th stage we consider intervals of size $\eps_i$. When defining dirty intervals, we look only at one part (say, the right one). An interval $I$ of size $\eps_i$ is considered as dirty for a right vertex $y$ if some part of this interval is allocated to some edge connecting $y$ to some vertex $x$ from the left part. We are interested in intervals that are dirty everywhere (i.e., for every right vertex $y$). At each substage (of the $i$th stage), to create a new everywhere dirty interval, we use stars as shown. 
\begin{center}
\includegraphics[scale=1.2]{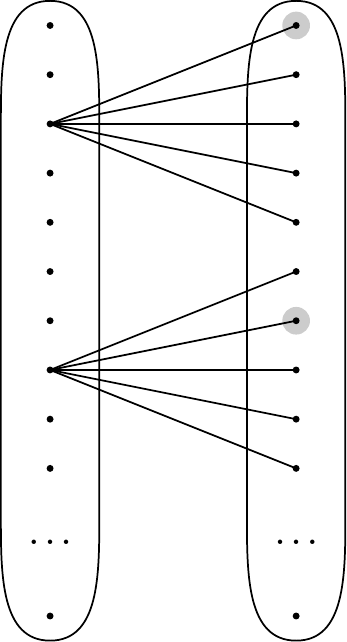}
\end{center}
In each star the sum of Alice's weights is $d/2$; we choose an edge for which Bob's label is not in the everywhere dirty intervals found during previous substage, and look at the $\eps_i$ interval 
   % "where it goes" or "to which is goes"?
where it goes. Since there are many stars, some dirty interval occurs many times; Alice selects such an interval and uses the right vertices of corresponding edges as new active right vertices. On the left side, Alice uses vertices that do not appear in the stars, as new active left vertices. (In this way we have much more active vertices on the left; if for some reason we want to keep the same number of left and right vertices, Alice may delete part of the remaining vertices..)

\section{Proof of  Theorem~\ref{th:informationDistanceEquality} }

\subsection*{Easy strategies for Bob}

We use the same game as before, but this time, we only consider strings of length exactly~$n$.
The game has 2 parameters, $n$ and $d>0$.
Alice must satisfy $m_{u,v} \le d$, and Bob can use all of his Cantor spaces, 
in other words, we set~$c=1$. This time, we must provide a winning strategy for Bob 
that works for some fixed constant $d>0$ and all~$n$. 

Each time Alice increases a weight $m_{u,v}$ to a value~$\varepsilon$, 
we say that she makes a {\em request} of the form~$(\{u,v\},\varepsilon)$. 
To each string $u$, we associate a Cantor space $\Omega_u$. 
If Bob replies by enumerates an interval $[q]$ in $M_{u,v}$, we say that he allocates 
$[q]$ both in $\Omega_u$ and $\Omega_v$. 
Recall that prefix-free non-bipartite information distance is maximal. 
Hence, we assume that Alice's requests are negative powers of $2$ and 
that Bob must always reply by assigning an interval of at least the same size, unless 
he has already done this.

To understand the main ideas behind Bob's strategy, we first consider 
variants of the game that are easier for Bob to win.

\smallskip
In the most basic variant, we require that 
Alice can only make requests of a fixed size~$\varepsilon$. 
In this case, the game reduces to the argument we used for the plain 
information distance, and a greedy strategy works for all~$d \le 1/2$.

\smallskip
In a related variant, we assume that Alice only uses request sizes from a finite set $\Epsilon$.
We also assume that Alice and Bob are given a probability measure $P$ over~$\Epsilon$,
and that all values of $P$ and~$\Epsilon$ are dyadic.
For each string $u$ and $\varepsilon \in \Epsilon$, 
let $W_{u,\varepsilon}$ be the total measure of requests of size $\varepsilon$ on~$u$, 
in other words, it is equal to $\varepsilon$ times the number of requests of the form~$(\{u,\cdot\}, \varepsilon)$.
We consider the game in which Alice's requests should satisfy $W_{u,\varepsilon} \le d \cdot P(\varepsilon)$ for all $u$ and $\varepsilon \in \Epsilon$.

In this case a winning strategy for Bob exists if $d \le 1/2$:
he divides each Cantor space in regions, and to each size $\varepsilon \in \Epsilon$, he associates 
a region of measure~$P(\varepsilon)$. All Cantor spaces $\Omega_u$ are partitioned  in the same way.
When given a request of size $\varepsilon$, Bob uses the same greedy strategy as before inside the corresponding region.
For the same reason as before, this strategy works for all~$d \le 1/2$.

\medskip
Now, consider the variant in which $\Epsilon$ has size $s$, i.e., Alice can use $s$ different sizes, 
and she can choose $P$ during the game. In other words, 
 Alice's requests should satisfy: 
\[
\sum_{\varepsilon \in \Epsilon}  \max_{|u|=n} \{W_{\varepsilon,u}\}
\;\;\le\;\; d\,.
\]

Bob has a winning strategy for $d=1/4$. It goes as follows. 
He creates $2s$ regions of equal size in Cantor space. 
He assigns the first $s$ regions to elements of $\Epsilon$. 
Initially, the other $s$ regions remain unassigned.
For all strings $u$, he uses the same partition of $\Omega_u$ into regions.
Each time $\max_u W_{u,\varepsilon}$ exceeds half of the measure of the $\varepsilon$-regions,
a new region is assigned to~$\varepsilon$.

What is the maximal number $r$ of assigned regions that can appear?
If $r_\varepsilon$ is the number of assigned regions for some $\varepsilon$, 
then
$
d \cdot P(\varepsilon) \ge (r_\varepsilon - 1) \cdot \tfrac{1}{2} \cdot \tfrac{1}{2s}.
$
Summing over $\varepsilon$, this implies $d \ge (r-s)/(4s)$.
Thus at most~$r = 2s$ regions can be assigned, and hence, the strategy can always proceed. 

\bigskip
\noindent
In the above variant, we considered a finite $\Epsilon$, and we soon explain 
why this is not an important restriction if $\Epsilon$ has size~$s=n$.
However, in the last strategy, we assumed that the distribution 
over request sizes in $\Epsilon$ is somehow the same for all strings, 
and avoiding this restriction is the hardest part of the argument.

To understand that the last strategy fails in the general case, 
assume that Alice makes requests of $s$ different sizes, 
and fixes $s$ strings for which she makes only requests of a single size.
Then the previous strategy needs $s^2/2$ regions, but has only $2s$ regions available.

This suggests the following approach. She creates $r=2s$ regions by partitioning 
the Cantor space in a large (but still polynomial) number of blocks of equal size, 
and the regions are obtained as random unions of blocks. In other words, the partitions 
of Cantor space is obtained from a random partition of the set of blocks.
Each string receives its own random partition. 
The idea is to use only 1 size in each block, so that the greedy strategy in the blocks is sufficient.

We explain this idea in more detail. 
The regions are numbered by $1, 2, \dots, r$.
For each pairs of strings,  about a fraction $1/r$ 
of the blocks will be assigned to the same region. 
We assign sizes to regions, as in the previous example: if the regions for a size 
gets too full, we add an unassigned region to this size. 
We say that a block is non-full 
if less than half of its measure is assigned.
When given a request $(\{u,v\},\varepsilon)$, 
we choose a block that, both for~$u$ and for~$v$, is non-full and lies in an $\varepsilon$-region.
If such a block exists, we allocate an available interval in this block. 
Note that such interval exists, because inside blocks we can again play the trivial strategy.

The problem is that the required block might not exist. 
We will prove that in a random partition, this event does not happen very often 
and that there exists a different strategy that can handle all remaining requests.

In the next four subsections we present the proof.
First, we discuss a game in which Alice's moves are restricted, and show how a winning strategy for Bob implies the theorem.
Then, we present the strategy discussed above that handles the bulk of the requests. 
Afterwards, we present a second strategy that handles the remaining requests. 
Finally, some combinatorial statements are proven that imply correctness of the two above strategies.

%More precisely, if a suitable block does not exist, then at least one of the string 
%can be ``blamed'' for this, (among the overlapping blocks, at least half are full). 
%We show that each string can be blamed for at most a polynomial amount of requests, 
%and this extra structure allows the success of a different algorithm.

\subsection*{The restricted game}

We consider the game as described above. Recall that $c=1$, is played on $n$-bit strings, and 
Alice's requests sizes are negative powers of~$2$. 
We additionally require that Alice's requests satisfy the following conditions.

\begin{enumerate}[wide,labelwidth=!,labelindent=0pt]
\item[(A)]
    {\em All requests have size at least $2^{-n}$.} 

\item[(B)]
    {\em All requests have size at most $n^{-p}$.} (We use this at the end with $p=6$.)
\end{enumerate}

\noindent
Assumption (A) implies that at most $n$ different request sizes are used.
If Bob has a winning strategy in a game with parameter $d$ in which Alice is restricted by (A), then
he also has a winning strategy in an unrestricted game with parameter at most~$d/2$, 
because Bob can start by connecting all pairs using the bottom half of the Cantor spaces. 
After this, he can ignore all small requests.

\smallskip
\noindent
Assumption (B) changes the situation significantly, because it prevents Alice from playing the 
winning strategy of the previous result. In the next subsections, we present Bob's winning strategy.
%Later we show that there exists a winning strategy for Bob in which $d>0$ is a constant.
We now show that this implies Theorem~\ref{th:informationDistanceEquality}.

\begin{lemma}\label{lem:theoremToRestrictedGame}
  If there exists a $d>0$ such that Bob has a computable winning strategy in the restricted game for $p=6$ and for all $n$, 
  then  Theorem~\ref{th:informationDistanceEquality} is true.
\end{lemma}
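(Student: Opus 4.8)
The plan is to reverse the game-to-statement reduction used in the proof of Theorem~\ref{th:main}: there, a winning strategy for \emph{Alice} refuted the $O(1)$-equality; here, a winning strategy for \emph{Bob} must establish it (for pairs at superlogarithmic distance). So I would start from a hypothetical pair of strings $x,y$ with $|x|=|y|=n$, $E_1(x,y)\ge 6\log n$, and the goal of building, from Bob's strategy, a prefix-free (hence also prefix-stable, bipartite or not) machine $U$ witnessing $E_U(x,y)\le E_1(x,y)+O(1)$. Since the prefix-free non-bipartite distance is the largest of the four, bounding it suffices. By Proposition~\ref{prop:qpd} (in its logarithmic reformulation), $E_1(x,y)=-\log\min(\mm(x\cnd y),\mm(y\cnd x))+O(1)$, and $\mm$, being lower semicomputable, is approximable from below; restricted to $n$-bit strings this gives Alice's weights. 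Concretely: given $n$, let Alice enumerate, for each pair $u,v$ of $n$-bit strings, rational lower bounds converging to (a constant multiple of) $\min(\mm(u\cnd v),\mm(v\cnd u))$, rounding each increment \emph{down} to a negative power of $2$ and issuing it as a request $(\{u,v\},\varepsilon)$. The symmetric function $u\mapsto\min(\mm(u\cnd v),\mm(v\cnd u))$ satisfies $\sum_v \le 1$, so after scaling by the appropriate $d$ (the $d$ from Bob's winning strategy) Alice respects the weight bound $m_{u,v}\le d\cdot(\text{something}\le1)$ — more precisely one plays the game with total-weight parameter $d$ against a rescaled copy of $\mm$. Then run Bob's computable winning strategy against this computable Alice.

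The next step is to check that Alice's requests, after the rounding, really fall inside the \emph{restricted} game, i.e.\ satisfy (A) and (B). Constraint (A) (size $\ge 2^{-n}$) is where the value of $E_1$ enters: if $E_1(u,v)$ exceeds, say, $n+O(1)$, we simply do not need to cover that pair by this construction — for such pairs $E_1(u,v)$ is already $\ge$ something, but we actually only care about the specific pair $x,y$ and there $E_1(x,y)\le n+O(1)$ always (since $\KP(x\cnd y)\le |x|+O(\log)$, and more carefully $\le n+2\log n+O(1)$; one truncates the enumeration so that no weight smaller than $2^{-n}$ is ever requested, losing at most a factor $2$ in total weight, absorbed into $d$). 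Constraint (B) (size $\le n^{-p}$ with $p=6$) is exactly the hypothesis $E_1(x,y)\ge 6\log|x|$: the largest relevant weight on edges at the pair $x,y$ is $2^{-E_1(x,y)+O(1)}\le n^{-6}\cdot 2^{O(1)}$, so after discarding a constant number of the largest increments (again absorbed into $d$) Alice's requests all have size $\le n^{-6}$. One should be a little careful that (B) must hold for \emph{all} of Alice's requests in the game, not just those touching $x,y$; so the honest move is to restrict Alice to enumerate only the part of $\mm$ corresponding to pairs whose current lower bound for $E_1$ is already $\ge 6\log n$, which is a computably enumerable (in fact co-enumerable-in-the-relevant-direction) condition, and to handle the finitely many excluded large weights by a separate, trivial reservation of a constant-size chunk of the Cantor space. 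The point is that $x,y$ with $E_1(x,y)\ge 6\log n$ survives into this restricted subgame.

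Finally, since Bob wins the restricted game with the fixed constant $d$, at the end of play he has, for every pair $u,v$ surviving into the subgame, a clopen set $M_{u,v}$ with $\nu(M_{u,v})\ge \lim m_{u,v}$, and the family $\{M_{u,v}\}$ is a legal system of matchings (disjointness around each vertex) generated by a computable process. By the standard correspondence (exactly as in Section~\ref{subsec:game-enough}, read in the other direction), this family \emph{is} a prefix-free machine $U$ with $E_U(u,v)\le -\log\nu(M_{u,v})+O(1)\le -\log m_{u,v}+O(1)$. Combined with Alice playing a constant multiple of $\min(\mm(x\cnd y),\mm(y\cnd x))$, this gives $E_U(x,y)\le E_1(x,y)+O(1)$ for our pair, and since $U$ is a legal machine for the prefix-free non-bipartite distance, the minimal such distance is $\le E_1(x,y)+O(1)$; the matching lower bound $E_1\le E_U$ holds always. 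Hence all four prefix distances equal $E_1(x,y)+O(1)$ whenever $|x|=|y|$ and $E_1(x,y)\ge 6\log|x|$. The main obstacle, and the only genuinely delicate point, is managing the bookkeeping around constraints (A)/(B) uniformly in $n$: one must make the rounding, truncation, and the ``large-weight reservation'' all computable and all cost only an $O(1)$ additive loss in $-\log$ (equivalently, only a constant-factor loss in $d$ and in the semimeasure), so that the single constant $d$ from Bob's strategy suffices for every $n$ simultaneously.
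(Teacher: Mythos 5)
Your proposal is correct and takes essentially the same route as the paper: for each length $n$, a computable Alice enumerates a scaled, rounded approximation of $d\cdot 2^{-E_1(u,v)}$ (equivalently $d\cdot\min(\mm(u\cnd v),\mm(v\cnd u))$) on $n$-bit pairs, plays it against Bob's computable winning strategy, and the resulting family of disjoint intervals $M_{u,v}$ is read off as a prefix-free machine $V$ with $V(q,x)=v$ whenever $[q]\subseteq M_{x,v}$, giving $E_V(x,y)\le E_1(x,y)+O(\log\tfrac1d)$; in fact you are more explicit about constraints (A) and (B) than the paper's own terse proof. One small slip worth noting: $E_1$ is \emph{upper} semicomputable, so Alice has decreasing upper approximations rather than ``lower bounds'' and cannot enumerate the $\Pi_1$ set of pairs with $E_1\ge 6\log n$ — but she does not need to, since the clean fix is to cap each weight at $n^{-6}$, which is never binding for a pair with $E_1(u,v)\ge 6\log n$ (because then $d\cdot 2^{-E_1(u,v)}\le d\cdot n^{-6}\le n^{-6}$), so exactly the pairs covered by the theorem receive their full allocation while (B) holds throughout; the (A)-side truncation issue you gesture at is more cleanly handled by the paper's pre-allocation trick (Bob spends half the Cantor space giving every pair a length-$(n{+}1)$ program up front), which also makes the ``factor $2$'' bookkeeping precise.
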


\begin{proof}
  We construct a prefix-free machine $V$ such that for all different $n$-bit strings $x,y$ 
  there is a program $q$ for which $V(q,x) = y$, $V(q,y) = x$ and $|q| \le \Emax(x,y) + O(\log \tfrac{1}{d})$. 

  \medskip
  \noindent
  {\em Computation of $V$ on  input $(q,x)$.} 
  Bob's winning strategy for strings of length $n = |x|$ 
  is played against Alice's strategy in which she sets $m_{u,v} = d2^{-\Emax(u,v)}$,
  using a non-increasing approximation of~$\Emax$. 
  Note that when approximations improve, the weights increase.
  Bob replies by enumerating intervals $M_{u,v}$, and 
  if $[q]$ is enumerated in some set $M_{x,v}$, 
  the output is $V(q,x) = v$.

  \medskip
  \noindent
  Note that the winning condition indeed implies that for all different $n$-bit $x,y$, 
  there exists a string $q$ of length $\log \tfrac{1}{d} + \Emax(x,y)$ with $V(q,x) = y$ and $V(q,y) = x$.
  Also, 
  The prefix-free non-bipartite information distance defined using $V$ satisfies the conditions of the lemma.
\end{proof}

\subsection*{The first substrategy}

The first substrategy uses the top halves of all Cantor spaces~$\Omega_u$, and the second strategy the bottom halves. 
It allocates most requests $(\{x,y\},\varepsilon)$, and when it fails to allocate some request, 
it does something extra: it {\em blames} either $x$ or $y$. 
We show that:

\vspace*{-5mm}
\begin{flalign}\tag{*}\label{cond:blamedLittle}
  \text{\parbox{0.9\textwidth}{
  {\em Every string is blamed at most $O(n^3)$ times.
  } 
  }
  }
\end{flalign}
\vspace*{-5mm}

\noindent
Let $n$ be large.
The substrategy starts by partitioning the Cantor space in $\ell = 2^7 n^4$ blocks of equal size.
These blocks are assigned to $r=2n$ regions. For each string, this partitioning happens differently.
The assignment is represented by a colouring of the blocks using $r$ different colours from a set~$\Sigma$. 
Thus, each assignment corresponds to an element $v \in \Sigma^\ell$, and vice versa, 
 each such $v$ and $a \in \Sigma$ determine a region containing blocks with indices
\[
v[a] = \{i : v_i  \mathsmaller{\, =\, } a\}. 
\]
For each $n$-bit $x$, we use a list $v$ from a set $S \subseteq \Sigma^\ell$ that satisfies the conditions 
of the following lemma.

\newcommand{\combLemmaOne}{
  Let $|\Sigma| = r \ge 2$. If $n \ge 2 + \log r$ and $\ell \ge 2^{7} r^3n$, 
  there exists a set $S \subseteq \Sigma^\ell$ of size $2^n$
  such that for all $(v,a) \in S \times \Sigma$: 
  \begin{itemize}[leftmargin=*]
    \item 
      at least a $\tfrac{1}{2r}$-fraction of elements in $v$ are equal to $a$, i.e., $\big|v[a]\big| \;\ge\; \tfrac{\ell}{2r}$, 

    \item  for every $I \subseteq v[a]$ of size $\ell/(8r)$, there are at most $O(r^2)$ pairs $(w,b) \in S \times \Sigma$ for which
  \end{itemize}
  \[
  \moreHorizSpaceInFormulas
  \Big| I \cap w[b] \Big| \ge \tfrac{1}{2}\Big| v[a] \cap w[b] \Big|.
  \]
  }
\begin{lemma}\label{lem:expanderlike}
  \combLemmaOne
\end{lemma}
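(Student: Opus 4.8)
The plan is to prove the lemma by the probabilistic method: let the $2^n$ strings of $S$ be chosen independently and uniformly from $\Sigma^\ell$ (equivalently, each of the $\ell$ coordinates of each string is an independent uniform element of $\Sigma$), and show that with positive probability all the stated properties hold; any realization that works is then the desired $S$. The $2^n$ chosen strings are pairwise distinct with overwhelming probability, since $|\Sigma|^\ell = r^\ell \ge 2^\ell$ dwarfs $\binom{2^n}{2}$ once $\ell \ge 2^7 r^3 n$. The first property is routine: for fixed $v$ and $a$, the quantity $|v[a]|$ is a sum of $\ell$ independent $0/1$ variables with mean $\ell/r$, so a Chernoff bound gives $\Pr[\,|v[a]| < \ell/(2r)\,] \le \exp(-\ell/(8r))$; since $\ell/(8r)\ge 16r^2n$ dominates $n+\log r$, a union bound over the at most $2^n r$ pairs $(v,a)$ shows that with probability close to $1$ every $|v[a]|$ is at least $\ell/(2r)$.

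The second property is the heart of the matter. Fix $v$ and $a$ and condition on $v$; by the first property we may assume $A := |v[a]| \ge \ell/(2r)$, and we set $J := v[a]$ and $m := \ell/(8r)\le A/4$, the remaining strings staying independent and uniform. For $w\in S\setminus\{v\}$ and $b\in\Sigma$, call $(w,b)$ \emph{$I$-heavy} if $|I\cap w[b]|\ge\tfrac12|J\cap w[b]|$; writing $J':=J\setminus I$ (so $|J'|=A-m\ge 3m$), this is equivalent to $|I\cap w[b]|\ge|J'\cap w[b]|$. Since $w$ is uniform, $|I\cap w[b]|$ and $|J'\cap w[b]|$ are independent binomials $\mathrm{Bin}(m,1/r)$ and $\mathrm{Bin}(A-m,1/r)$, and a Chernoff bound for a difference of independent binomials gives
\[
\Pr\big[(w,b)\ \text{is}\ I\text{-heavy}\big]\ \le\ \exp\!\Big(-\tfrac1r\big(\sqrt{A-m}-\sqrt{m}\,\big)^2\Big)\ \le\ \exp\!\Big(-\tfrac{A}{16r}\Big),
\]
the last inequality using $A\ge 4m$. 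These events are independent across distinct $w$, as they involve disjoint blocks of coordinates. If, for some $I$, more than $tr$ pairs $(w,b)$ are $I$-heavy, then more than $t$ strings $w$ carry an $I$-heavy colour, whence
\[
\Pr\big[\,\exists\, I\subseteq J,\ |I|=m,\ \text{more than}\ t\ \text{such}\ w\,\big]\ \le\ \binom{A}{m}\binom{2^n}{t}\,r^{\,t}\exp\!\Big(-\tfrac{tA}{16r}\Big).
\]
Taking logarithms and using $\binom{A}{m}\le(eA/m)^m$ with $m=\ell/(8r)$ bounds this by $\tfrac{\ell}{8r}\log\tfrac{eA}{m}+O(rn)-\tfrac{tA}{16r\ln 2}$. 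The positive term $\tfrac{\ell}{8r}\log\tfrac{eA}{m}$ grows with $A$, but so does the magnitude of $\tfrac{tA}{16r\ln 2}$, and a short calculation shows that for a suitable $t=\Theta(r)$ the latter dominates over the whole range $A\in[\ell/(2r),\ell]$; using $\ell\ge 2^7 r^3 n$ the surviving negative term has order $r^2 n$, so the probability is $2^{-\Omega(r^2 n)}$. A union bound over the at most $2^n r$ pairs $(v,a)$ then keeps the total failure probability below $1$, so a set $S$ with all the required properties exists.

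For such an $S$, every size-$m$ subset $I$ of $v[a]$ has at most $t=O(r)$ strings $w\ne v$ carrying an $I$-heavy colour, hence at most $O(r^2)$ $I$-heavy pairs $(w,b)$ with $w\ne v$; the pairs with $w=v$ are $I$-heavy only trivially — for $b\ne a$, where both sides of the inequality vanish — so they contribute at most $r$ more. Thus the total number of such pairs is $O(r^2)$, which is the assertion of the lemma.

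The step I expect to be delicate is the balancing in the second paragraph. Estimating $\binom{A}{m}$ crudely by $(eA/m)^m\le(8er)^m$ (using only $A\le\ell$) inserts a factor $\log r$ into the exponent and yields only the weaker bound $O(r^2\log r)$; to reach $O(r^2)$ one must retain the dependence on $A$, exploiting that a larger region $v[a]$, although it has more size-$m$ subsets $I$, also makes $J'=J\setminus I$ much larger than $I$ and thereby sharpens the deviation bound $\exp(-\tfrac1r(\sqrt{A-m}-\sqrt m)^2)$ by precisely the amount needed. One should also verify carefully that $(\sqrt{A-m}-\sqrt m)^2\ge A/16$ holds throughout $A\in[\ell/(2r),\ell]$ — it is tight up to a constant at the lower end $A=\ell/(2r)$ and only improves as $A$ grows — so that the single clean bound $\exp(-A/(16r))$ may be used uniformly.
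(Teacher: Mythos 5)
Your proof is correct, and it takes a genuinely different route from the paper's, although both are probabilistic-method arguments built on Chernoff bounds and union bounds. The paper first establishes a two-sided concentration $\ell/(2r^2)\le|v[a]\cap w[b]|\le 2\ell/r^2$ for all $v\ne w$ and uses it three ways: (i) to obtain the first bullet by summing over $b$, (ii) to replace the relative threshold $|I\cap w[b]|\ge\tfrac12|v[a]\cap w[b]|$ by the absolute threshold $|I\cap w[b]|\ge\ell/(4r^2)$, and (iii) to bound the number of subsets $I\subseteq v[a]$ by $2^{|v[a]|}\le 2^{2\ell/r}$. It then bounds the probability of $k$ heavy pairs by observing that the \emph{sum} of the $k$ intersection sizes exceeds $k\ell/(4r^2)$ and applying a Chernoff bound to that sum. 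You avoid the two-sided concentration entirely: you prove the first bullet directly and handle the relative threshold by comparing $|I\cap w[b]|$ with $|(v[a]\setminus I)\cap w[b]|$ as independent binomials, using a Poisson-type tail bound $\Pr[X\ge Y]\le\exp\bigl(-p(\sqrt{m'}-\sqrt m)^2\bigr)$ (which is correct: the binomial MGF is dominated by the Poisson one via $1+x\le e^x$, so the Poisson Chernoff exponent $-(\sqrt{\lambda_1}-\sqrt{\lambda_2})^2$ applies); for the counting step you observe that more than $tr$ heavy pairs forces more than $t$ heavy strings, and union-bound over $t$-subsets of $S$, exploiting independence across strings. You are also sharper with the $\binom{A}{m}$ bound, keeping its $A$-dependence to avoid a spurious $\log r$ factor, whereas the paper uses the cruder $2^{|v[a]|}$ bound and compensates elsewhere. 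Your concern flagged at the end — that $(\sqrt{A-m}-\sqrt m)^2\ge A/16$ must hold over the whole range $A\in[\ell/(2r),\ell]$ — does check out: since $m=\ell/(8r)\le A/4$, one has $4m(A-m)\le 3A^2/4<225A^2/256$, which is what is needed. One small point worth tightening in writeup: you should phrase the argument so that you do not literally ``condition on the first property holding'' for all $(v,a)$ at once; rather bound separately the probability that some $|v[a]|<\ell/(2r)$ and the probability that some $(v,a)$ with $|v[a]|\ge\ell/(2r)$ violates the second bullet, each below $1/2$, and intersect.
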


\noindent
We postpone the proof to the last subsection, and continue with the substrategy.
Before the first request arrives, the substrategy searches for a set $S$ that satisfies the conditions of 
the lemma. For each $n$-bit $x$, it determines the corresponding allocation of blocks into regions.
It assigns the request sizes $2^{-1}$, $2^{-2}$, \dots, $2^{-n}$ to the first $n$ regions.
These regions are called the {\em $\varepsilon$-active} regions for the corresponding request sizes~$\varepsilon$. 
To the other $n$ regions, no sizes are assigned, and they are inactive. 
For every request size, there will always be a unique active region.

Let $x$ be an $n$-bit string.
A block in $\Omega_x$ is {\em full} if at least half of its measure is allocated.
A region in $\Omega_x$ is  {\em full} if at least $\tfrac{1}{8}$-th of its blocks are full.

\medskip
{\em Allocating a request $(\{x,y\},\varepsilon)$.} 
The substrategy considers the $\varepsilon$-active regions of $x$ and $y$. 
It searches for a common block that is non-full for both strings.
If no such block exists, then no interval is assigned, 
and a string is blamed for which at least half of the common blocks are full, 
(ties can be broken in an arbitrary way, or both strings can be blamed).
If an $\varepsilon$-active region becomes full, it is made inactive, 
and a new $\varepsilon$-active region is assigned from the unused ones.

\medskip
Recall that $r = 2n$ and that only half of the Cantor space is used by this substrategy. 
By the first item of  Lemma~\ref{lem:expanderlike}, each region has size at least $1/(4r)$.
Thus, if $d \le \tfrac{n}{4r} \tfrac{1}{8} \tfrac{1}{2} = 2^{-7}$, at most $n$
regions can be full. Besides this, at most $n$ active regions are needed, and
hence, $r$ regions are enough in order to always allocate an unused region.

\medskip
The requirement \ref{cond:blamedLittle} follows from the second item of the lemma.
Indeed, for some string, let $I$ be the set of the full blocks at a given moment.
Each time a string is blamed while a region is active, at least half of the common blocks are inside $I$.
By the lemma, for a fixed region, there are at most $O(n^2)$ strings for which this can happen, and 
hence there are at most $O(n^2)$ requests for which the string can be blamed.
Because there are $2n$ regions, the maximal number of times a string can be
blamed is $O(n^3)$.  Hence, the requirement for the first substrategy are satisfied.

\subsection*{The second substrategy}

This strategy must allocate all remaining requests. Each string receives 2 types of requests. 

\medskip
\noindent
\textbf{Requests on which it is blamed.} 
The number of such requests is at most~$O(n^3)$.
Recall from assumption (B) that requests have size at most $n^{-p}$, and
hence, their measure is at most $O(n^{3-p})$. For large $p$, this can be made arbitrarily small.
We will again use regions and blocks, and 
will choose $p$ such that all these requests can fit in a small fraction of a single block 
(which has size about $1/n^2$). 

\medskip
\noindent
\textbf{Requests on which the string is not blamed.} 
Averaged over all strings, this number of requests is also $O(n^3)$. But for some 
strings the total measure can be a constant fraction of the Cantor space,  
but in any case, at most a fraction~$d$.

\medskip
The main idea is that since each string is blamed so rarely, 
it does not matter much where the corresponding allocations happen.
But the other requests need to be properly structured.
Thus in a request, the string that is not blamed will play a dominant role
by determining in which region the interval will be assigned.
The terminology is inspired by human relationships, where a person who can create 
blame (and guilt) in the other, typically has more influence in a decision.

We explain the initialization of the second substrategy,
which uses the remaining half of the Cantor spaces.
Again it is partitioned into $2n$ regions, but now the partition is straightforward: 
the regions have equal size, and are identical for all strings. 
Each region is further subdivided in blocks. We use 2 types of blocks:
the  {\em dominant} blocks are used to allocate requests in which the string is not blamed, 
and  {\em submissive} blocks, which are used for the other requests.
Each region is partitioned in $s = 64n$ equal blocks, and this partition is different for all strings.
For a fixed string, we use the same subdivision for all the regions 
(or different ones, it does not matter as long as the combinatorial requirements are met), 
and for each pair of strings, there should be a sufficient overlap 
between dominant blocks of the first string and submissive blocks of the second one.
For a fixed region, the division of each string can be represented as an $s$-bit string. 
If $a,b \in \{0,1\}^s$ represent divisions of the same region for two different strings, 
then the number of indices $i$ with $a_i = 1$ and $b_i=0$ 
equals~$\sum_{i=1}^s a_i (1-b_i)$, and should be at least~$s/8$.
The following lemma provides the required sets of partitions.

\newcommand{\combLemmaTwo}{
  For $s \ge 64 n$, 
  there exists a subset of $\{0,1\}^s$ of size $2^n$
  such that $\sum_{i=1}^{s} a_i (1-b_i) \ge s/8$ 
  for any two different elements $a$ and $b$ in the set. 
  }
\begin{lemma}\label{lem:pairwiseHammingDistances}
  \combLemmaTwo
\end{lemma}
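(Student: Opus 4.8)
The statement asks for $2^n$ binary strings of length $s \ge 64n$ such that for any two distinct strings $a,b$ in the set, the number of coordinates where $a$ has a $1$ and $b$ has a $0$ is at least $s/8$. The plan is a standard probabilistic (first-moment) argument. First I would choose each of the $2^n$ candidate strings independently, with each coordinate an independent fair coin flip, so that each string is uniform on $\{0,1\}^s$. For a fixed ordered pair of distinct candidates $(a,b)$, the quantity $\sum_{i=1}^s a_i(1-b_i)$ is a sum of $s$ i.i.d.\ Bernoulli$(1/4)$ variables (since $\Pr[a_i=1,\,b_i=0]=1/4$ independently across $i$), hence has expectation $s/4$. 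I want to bound the probability that this sum drops below $s/8$, i.e.\ deviates below half its mean.

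The key step is a Chernoff bound: $\Pr[\text{Binomial}(s,1/4) < s/8] \le e^{-cs}$ for an absolute constant $c>0$ (one can take the standard bound $\Pr[X < (1-\delta)\mu] \le e^{-\delta^2\mu/2}$ with $\delta = 1/2$, $\mu = s/4$, giving $e^{-s/32}$). There are fewer than $2^{2n}$ ordered pairs of candidates, so by the union bound the probability that \emph{some} ordered pair violates the required inequality is at most $2^{2n} e^{-s/32}$. With $s \ge 64n$ this is at most $2^{2n} e^{-2n} = (4/e^2)^n < 1$, since $4/e^2 < 1$. Therefore with positive probability \emph{all} ordered pairs satisfy $\sum_i a_i(1-b_i) \ge s/8$; in particular the $2^n$ chosen strings are pairwise distinct (a repeated string would give a pair with the sum equal to $0 < s/8$), and the desired set exists.

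I do not anticipate a real obstacle here; the only mild point of care is bookkeeping the constant, namely checking that the slack in $s \ge 64n$ is enough to beat the $2^{2n}$ from the union bound after applying whichever explicit Chernoff estimate one prefers. Using $e^{-s/32}$ and $s = 64n$ gives exactly the margin $(4/e^2)^n$, which is comfortably below $1$, so the constant $64$ stated in the lemma is adequate (and not tight — a smaller multiple of $n$ would already work, but $64n$ keeps the arithmetic clean and matches the value $s = 64n$ used in the second substrategy). If one wants to avoid invoking Chernoff as a black box, the same conclusion follows from a direct estimate of the binomial tail $\sum_{k < s/8} \binom{s}{k} 3^{s-k} / 4^s$ via $\binom{s}{k} \le 2^{s H(1/8)}$ with $H$ the binary entropy, but the Chernoff route is shortest.
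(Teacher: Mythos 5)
Your argument is exactly the paper's: independent uniform random strings, each term $a_i(1-b_i)$ Bernoulli$(1/4)$, a Chernoff/Hoeffding tail bound giving $\exp(-s/32)$ for a single pair, and a union bound over $2^{2n}$ pairs yielding $(4/e^2)^n<1$. The proposal is correct and takes essentially the same approach as the paper, down to the same concentration constant.
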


\noindent
We postpone the proof to the next subsection.
We say that a dominant block is  {\em full} if 
the total measure of allocated intervals in it is at least half of its measure.
We call a region  {\em full} if at least $\tfrac{1}{8}$th of its dominant blocks are full.
As before, if a region is full, a new region is assigned for the measure and if
$d$ is small, a total of $2n$ regions is enough to execute the strategy.

For each request, the substrategy selects a block in the active region 
of the non-blamed string that is
(1) is  submissive for the blamed string, and 
(2) dominant and non-full for the string that is not blamed.
Inside this block, an unallocated interval is selected.

\medskip
We need to prove that this strategy always works under the assumption~\ref{cond:blamedLittle}.
Thus, we need to explain that a required block exists, and that it has a free interval.

The required block exists by the conditions of the lemma: for a given region in two different strings,
at least $\tfrac{1}{8}$th of the dominant blocks overlap with the submissive blocks of the other string.
Less than $\tfrac{1}{8}$ of these blocks are full for the dominant string, so a non-full block exists.

The existence of a free interval in this block follows by bounding the allocated 
measure of the block. By definition of non-full blocks, less than half 
of its measure is allocated by the string that is not blamed. 
The other string can have at most an allocated measure $O(n^{3-p})$, 
because as we discussed, this is the total measure of requests in which the string is not blamed, 
and since the block is submissive, this string has allocated only such requests.
If we choose $p = 6$, then for large $n$ this is less than half 
of the size of a single block (which is proportional to $1/n^2$).

We have proven that the given substrategy allocates all remaining requests.
To prove the theorem, it remains to prove the 2 combinatorial lemmas.

\subsection*{Combinatorial statements}

We start with Lemma~\ref{lem:pairwiseHammingDistances} from the previous subsection. 
It is a slightly stronger variant of the following known result:
for all $s$ sufficiently larger than $n$, there exist $2^n$ bitstrings of length $s$
whose pairwise Hamming distances are at least $s/8$. We restate the lemma.

\begin{lemma*}
  \combLemmaTwo
\end{lemma*}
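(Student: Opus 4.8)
Proof proposal for Lemma~\ref{lem:pairwiseHammingDistances} (the statement: for $s \ge 64n$ there exist $2^n$ strings in $\{0,1\}^s$ with $\sum_i a_i(1-b_i) \ge s/8$ for any two distinct $a,b$).

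The plan is to use the probabilistic method, picking the $2^n$ strings independently and uniformly at random from $\{0,1\}^s$ and showing that the bad event has probability less than $1$. First I would fix two distinct indices and analyze a single pair $(a,b)$: the quantity $X = \sum_{i=1}^s a_i(1-b_i)$ is a sum of $s$ i.i.d.\ Bernoulli random variables, each equal to $1$ with probability $1/4$ (since we need $a_i = 1$ and $b_i = 0$), so $\mathbb{E}[X] = s/4$. I want $\Pr[X < s/8] = \Pr[X < \tfrac12 \mathbb{E}[X]]$ to be exponentially small in $s$; a Chernoff bound gives $\Pr[X < s/8] \le e^{-s/32}$ (or a similar constant in the exponent — the precise constant only affects how large $s$ must be relative to $n$).

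Next I would take a union bound over all $\binom{2^n}{2} < 2^{2n}$ ordered-or-unordered pairs — note the condition $\sum_i a_i(1-b_i)\ge s/8$ is not symmetric in $a,b$, so to be safe I would union-bound over all $2^{2n}$ ordered pairs, which costs at most a factor of $2$. The probability that some pair fails is then at most $2^{2n} e^{-s/32}$, and this is less than $1$ as soon as $s/32 > 2n\ln 2$, i.e.\ $s > 64 n \ln 2 \approx 44.4\,n$, which is comfortably implied by $s \ge 64 n$. Hence with positive probability all pairs satisfy the required inequality, so a good set of $2^n$ strings exists. (One should also check the strings can be taken distinct: if two coincide the inequality fails trivially, so the event already rules that out; alternatively note distinctness happens with high probability.)

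The main obstacle — really the only place requiring care — is tracking the constants so that the threshold comes out to exactly $s \ge 64n$ rather than some larger multiple of $n$. The naive Chernoff bound $\Pr[X \le (1-\delta)\mu] \le e^{-\delta^2\mu/2}$ with $\delta = 1/2$, $\mu = s/4$ gives exponent $s/32$, and $2^{2n} < e^{2n}$ forces $s > 64n$, which is tight but works. If the slack is too thin one can instead use the sharper bound $\Pr[X \le (1-\delta)\mu] \le e^{-\delta^2\mu/(2(1-\delta/3))}$ or the relative-entropy form $\Pr[X \le \gamma s] \le e^{-s D(\gamma \,\|\, 1/4)}$ with $\gamma = 1/8$, which gives a strictly better exponent and leaves room to spare. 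Everything else is a routine union-bound argument, and no further combinatorial input is needed.
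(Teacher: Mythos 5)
Your proposal is correct and follows essentially the same route as the paper: the probabilistic method with a Chernoff/Hoeffding bound giving $\Pr[\sum_i a_i(1-b_i) < s/8] \le e^{-s/32}$ for a single pair, followed by a union bound over the $2^{2n}$ ordered pairs, exactly as in the paper (which writes the tail bound as $\exp(-2s/8^2)\le\exp(-2n)$ and concludes $2^{2n}\exp(-2n)<1$). Your additional remarks about the asymmetry of the condition and the distinctness of the sampled strings are minor clarifications the paper leaves implicit, not a different argument.
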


\begin{proof}
  We use the probabilistic method.
  If we generate $a,b \in \{0,1\}^s$ randomly, 
  then each term $x_i(1-y_i)$ in the sum equals $1$ with probability~$\tfrac{1}{4}$. 
  The probability that the sum is smaller than $s/8$ is at most 
  \[
  \exp \left(-2 \frac{s}{8^2} \right)  \le \exp (-2n).
  \]
  Now assume we generate $2^n$ bitstrings randomly. 
  By the union bound, the probability 
  that the condition is violated for some pair of bitstrings is at most
  $
   2^{2n} \exp (-2n) < 1.
   $
  Hence, there must be at least 1 set of strings that satisfies the condition of the lemma.
\end{proof}

\bigbreak
\noindent
We now proceed to Lemma~\ref{lem:expanderlike}.
%Recall that $\Sigma$ is an alphabet of size $r \ge 2$. 
%For any $v \in \Sigma^\ell$ and $a \in \Sigma$, let $R(v,a) = \{i : v_i = a\}$.

\begin{lemma*}%\label{lem:expanderlike}
  \combLemmaOne
%  Assume $n \ge 2 + \log m$, $k \ge 2^{10}m^2$ and $\ell \ge 2^{11}m^3n$. 
%  There exists a set $S$ of $2^n$ strings in $\Sigma^\ell$ 
%  such that for all for all $v \in S$, $a \in \Sigma$, and $I \subseteq R(v,a)$ 
%  of size $\ell/8m$,
%  there are less than $k$ pairs $(w,b) \in S \times \Sigma$ for which
%  \[
%  \moreHorizSpaceInFormulas
%  \left| I \cap R(w,b) \right| \ge \tfrac{1}{2}\left| R(v,a) \cap R(w,b) \right|.
%  \]
\end{lemma*}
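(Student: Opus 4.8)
The plan is a standard probabilistic‑method argument, with one twist needed to make the union bound close. Draw $V_1,\dots,V_{2^n}$ independently and uniformly from $\Sigma^\ell$ and put $S=\{V_1,\dots,V_{2^n}\}$. First I would introduce a ``regularity'' event $G$ and check $\Pr[G]>\tfrac12$: $G$ says that the $V_i$ are pairwise distinct, that $\ell/(2r)\le|V_i[a]|\le 2\ell/r$ for all $i$ and $a$, and that $|V_i[a]\cap V_{i'}[b]|>0$ (indeed of order $\ell/r^2$) for all $i\neq i'$ and all $a,b$. Since $\mathbb{E}|V_i[a]|=\ell/r$ and $\ell\ge 2^7r^3n$, each of these deviations has probability $e^{-\Omega(\ell/r^2)}$ by Chernoff, and $\ell/r^2\ge 2^7rn$ together with $n\ge 2+\log r$ makes the union bound over the $(2^nr)^2$ relevant pairs far below $\tfrac12$. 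On $G$ the first item of the lemma holds automatically, and there is no pair $(w,b)$ with $v[a]\cap w[b]=\emptyset$, so every ``bad'' pair genuinely has $v[a]\cap w[b]>0$.

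For the second item, fix $j$, $a$, and (thinking of $V_j$ as fixed with $\ell/(2r)\le|V_j[a]|\le 2\ell/r$) a set $I\subseteq V_j[a]$ with $|I|=\ell/(8r)$; let $N$ count the pairs $(w,b)$ with $w=V_i$, $i\neq j$, and $|I\cap V_i[b]|\ge\tfrac12|V_j[a]\cap V_i[b]|$. For a fixed $i\neq j$ and $b$, writing $X=|I\cap V_i[b]|$ and $Y=|(V_j[a]\setminus I)\cap V_i[b]|$, the event is exactly $X\ge Y$; here $X,Y$ are independent binomials with $\mathbb{E}X=\ell/(8r^2)$ and $\mathbb{E}Y\ge 3\ell/(8r^2)$ (because $|V_j[a]\setminus I|\ge 3\ell/(8r)$), so a two‑sided Chernoff estimate around $\ell/(4r^2)$ gives $\Pr[X\ge Y]\le e^{-\Omega(\ell/r^2)}$. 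Hence the number $B_i$ of bad $b$'s for a fixed $i$ satisfies $B_i\le r$ and $\mathbb{E}B_i\le q:=r\,e^{-\Omega(\ell/r^2)}$, and the $B_i$ ($i\neq j$) are independent. Since $N\ge Cr^2$ forces at least $Cr$ of the $B_i$ to be nonzero, $\Pr[N\ge Cr^2]\le\binom{2^n}{Cr}q^{Cr}\le(2^nq)^{Cr}$; and $\ell\ge 2^7r^3n$ gives $2^nq\le 2^{-\Omega(\ell/r^2)}$, so $\Pr[N\ge Cr^2]\le 2^{-\Omega(C\ell/r)}$.

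The last step is the union bound over $j$, $a$, and $I$, and this is the point I expect to be the main obstacle: there are $\binom{\ell}{\ell/(8r)}=2^{\Theta((\ell/r)\log r)}$ subsets $I$ of $v[a]$ of the required size, and the tail estimate $2^{-\Omega(C\ell/r)}$ cannot absorb this for an absolute constant $C$. The fix is to use the \emph{upper} bound $|V_j[a]|\le 2\ell/r$ provided by $G$: then only $\binom{2\ell/r}{\ell/(8r)}\le 2^{\ell/r}$ subsets $I$ are admissible, since $|I|/|V_j[a]|$ stays in $[1/16,1/4]$ and the binary entropy of a bounded ratio is an absolute constant. Choosing $C$ to be a large enough absolute constant so that $2^{-\Omega(C\ell/r)}$ beats $2^{\ell/r}$, and using $\ell/r\gg n$ so the factor $2^n r$ from the union over $(j,a)$ is harmless, one gets $\Pr[G\text{ holds but the second item fails}]\le 2^n\,r\,2^{\ell/r}\,2^{-\Omega(C\ell/r)}<\tfrac12$. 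Together with $\Pr[\bar G]<\tfrac12$ and the trivial $O(r)$ contribution of the pairs with $w=v$ (which are ``bad'' for degenerate reasons), this shows that some outcome of $S$ satisfies both items of the lemma, with the constant hidden in $O(r^2)$ absolute.
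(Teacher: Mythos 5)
Your proof is correct and follows the paper's probabilistic-method blueprint: a random $S\subseteq\Sigma^\ell$, a regularity event that controls every $|v[a]\cap w[b]|$ (giving the first item and, crucially, the cap $|v[a]|\le 2\ell/r$ that bounds the number of admissible sets $I$ by $2^{O(\ell/r)}$), and then a union bound over $(v,a,I)$. You correctly identified that without the cap on $|v[a]|$ the union over $I$ would blow up; this is exactly the role played by the ``$|v[a]\cap w[b]|>2\ell/r^2$'' half of the paper's first bad event. Where you depart from the paper is in the tail estimate for the number of bad pairs. The paper reduces ``bad'' to the absolute threshold $|I\cap w[b]|\ge\ell/(4r^2)$ via the lower bound on $|v[a]\cap w[b]|$, and then applies one Chernoff bound to the aggregate $\sum_{i\le k}|I\cap w_i[b_i]|$, treating it as a sum of $k|I|$ independent Bernoullis; you instead compare $X=|I\cap w[b]|$ with $Y=|(v[a]\setminus I)\cap w[b]|$ directly (no intermediate threshold, using only independence of $V_i$ on disjoint index sets), and count distinct strings $w$ that carry some bad colour $b$, exploiting $B_i\le r$ so that $N\ge Cr^2$ forces $\ge Cr$ nonzero independent $B_i$'s. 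Your organisation is slightly cleaner in two respects: it makes the dependence structure explicit (distinct strings are independent; different colours of the same string are not, which the paper's ``$k|I|$ independent Bernoullis'' glosses over), and it isolates the degenerate $O(r)$ pairs with $w=v$, which the paper's phrasing of its first bad event quietly mishandles. Both routes give the same $O(r^2)$ conclusion under the same hypotheses, so I would classify this as the same strategy with a tidier final estimate rather than a genuinely different proof.
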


\begin{proof} Again, we use the probabilistic method. 
  We need to show that a random set $S \subseteq \Sigma^\ell$ 
  satisfies the requirements of the lemma with positive probability.
  We show that the requirements are satisfied if for some $k \le O(r^2)$, neither of the following conditions hold:
  \begin{itemize}[leftmargin=*]
    \item 
      There exist two pairs $(v,a)$ and $(w,b)$ in  $S \times \Sigma$ such that either
      \[
      \big| v[a] \cap w[b] \big| \;<\; \tfrac{\ell}{2r^2} \qquad \text{or} \qquad
      \big| v[a] \cap w[b] \big| \;>\; \tfrac{2\ell}{r^2}\;.
      \]

    \item 
      There exist $(v,a), (w_1, b_1), \dots, (w_k,b_k) \in S \times \Sigma$ and $I \subseteq v[a]$ of size $\tfrac{\ell}{8r}$, 
      such that for all $i \le k$: 
      \[
      \moreHorizSpaceInFormulas
      \Big| I \cap w_i[b_i] \Big| \;\ge\; \frac{\ell}{4r^2}\,.
      \]
  \end{itemize}
  The first item of the lemma follows from the negation of the first condition after 
  summing the left possibility over all~$b \in \Sigma$.
  For later use, note that the right possibility implies that $|v[a]| \le 2\ell/r$.

  Note that by the negation of the first condition, $|I \cap w_i[b_i]| \ge \ell/(4r^2)$ implies the inequality of the second item of the lemma.
  Thus, if this happens for less than $k$ pairs, then the number of pairs $(w,b)$ in the second requirement is also less than~$k$. 

  \medskip
  We show that such that the probability of the first condition is true, is less than $\tfrac{1}{2}$, and that for some $k \le O(r^2)$, 
  the conditional probability of the second condition, given the negation of the first, is also less than $\tfrac{1}{2}$. Hence, 
  with positive probability none of the conditions are true, and this implies the existence of the required set.

  \medskip
   {\em The first condition.} 
  The probability that $v_i = w_i$ for random $v,w \in \Sigma^\ell$ is~$1/r$. 
  By the Chernoff bound, the probability that this deviates more than $\ell/(2r)$ is 
  at most $2\exp(-\frac{\ell}{2r^2})$. By the union bound, this happens for 
  2 pairs in $S \times \Sigma$ with probability at most
  \[
  2 \cdot (r2^{n})^2 \cdot \exp \left(-\frac{\ell}{2r^2}\right).
  \]
  This is less than $\tfrac{1}{2}$ if $2(n+ 2 + \log r)2r^2 \le \ell$, and this is true by assumption 
  on~$n$ and~$\ell$. 

  \medskip
   {\em The second condition.} 
   For a fixed $(v,a)$, $b_1$ and $I$ the quantity 
   $\left| I \cap w_1[b_1] \right|$ is the sum of $|I|$ independent 
   Bernoulli variables with~$p = \tfrac{1}{r}$. If the condition is satisfied, then
   \[
      \moreHorizSpaceInFormulas
      \sum_{i \le k} \left| I \cap w_i[b_i] \right| \ge \frac{k\ell}{4r^2} \, . 
      \]

  The first quantity is the sum of $k|I| = k\ell/8r$ independent Bernoulli variables with $p = \tfrac{1}{r}$. 
  By the Chernoff bound, this happens with probability at most $\exp(-2\frac{1}{(2r)^2}k|I|)$. 
  The probability that this happens 
  for some choice of $(v,a), (w_1, b_1), \dots, (w_k, b_k) \in S \times \Sigma$ and $I \subseteq v[a]$ is bounded by
  \[
  \left( r2^n \right)^{k+1} \; 2^{2\ell/r} \; \exp\left( -\frac{k\ell}{16 r^3} \right),
  \]
  where the second term bounds the number of different subsets $I \subseteq v[a]$ as $2^{|v[a]|}$, and is bounded using the 
  negation of the first condition.
  This expression is less than $\tfrac{1}{2}$ if 
  \begin{align*}
    (n + \log r + 1)(k+1) &\le \tfrac{1}{2} \; \frac{k\ell}{16 r^3} \\
    \frac{2\ell}{r} &\le \tfrac{1}{2} \; \frac{k\ell}{16r^3}.
  \end{align*}
  The first inequality follows from $k + 1 \le 2k$ and the assumptions on $n$ and $\ell$.
  The second follows by choosing $k \le O(r^2)$. The lemma is proven.
\end{proof}

\end{document}